\newtheorem{theorem}{Theorem}[section]
\newtheorem{lemma}[theorem]{Lemma}
\newtheorem{corollary}[theorem]{Corollary}
\newtheorem{definition}[theorem]{Definition}
\newtheorem{proposition}[theorem]{Proposition}
\newcommand{\shuf}{\mbox{$\ \sqcup\hspace*{-2.33ex}\perp\ $}}
\DeclareMathOperator{\var}{var}
\newcommand{\ta}{\mathtt{a}}
\newcommand{\tb}{\mathtt{b}}
\newcommand{\tc}{\mathtt{c}}
\newcommand{\FO}{\mathsf{FO}}
\DeclareMathOperator{\Shuffle}{Shuffle}
\DeclareMathOperator{\Insert}{Insert}
\DeclareMathOperator{\Erase}{Erase}
\DeclareMathOperator{\Morphism}{MorphIm}
\DeclareMathOperator{\Projection}{Projection}
\DeclareMathOperator{\Abelian}{AbelianEq}
\DeclareMathOperator{\Multiply}{Multiply}
\DeclareMathOperator{\Subword}{Subword}
\DeclareMathOperator{\Length}{Length}
\begin{document}

\title{The Satisfiability of Extended Word Equations: The Boundary Between Decidability and Undecidability}         

\author[1]{Joel Day}

\author[2]{Vijay Ganesh}

\author[2]{Paul He}

\author[1]{Florin Manea}

\author[1]{Dirk Nowotka}

\affil[1]{\small Kiel University}
\affil[2]{\small University of Waterloo}

\date{}

\maketitle
\begin{abstract} 
The study of word equations (or the existential theory of equations over free monoids) is a central topic in mathematics and theoretical computer science. The problem of deciding whether a given word equation has a solution was shown to be decidable by Makanin in the late 1970s, and since then considerable work has been done on this topic. In recent years, this decidability question has gained critical importance in the context of string SMT solvers for security analysis. Further, many extensions (e.g., quantifier-free word equations with linear arithmetic over the length function) and fragments (e.g., restrictions on the number of variables) of this theory are important from a theoretical point of view, as well as for program analysis applications. Motivated by these considerations, we prove several new results and thus shed light on the boundary between decidability and undecidability for many fragments and extensions of the first order theory of word equations.
\end{abstract}

\section{Introduction}
\label{sec:intro}

A {\em word equation} is a formal equality $U = V$, where $U$ and $V$ are words (called the left, respectively, right side of the equation) over an alphabet $A \cup X$; $A=\{\ta,\tb,\tc,\ldots\}$ is the alphabet of \emph{constants} or {\em terminals} and $X = \{x_1, x_2, x_3, \ldots\}$ is the alphabet set of \emph{variables}. A \emph{solution} to the equation $U=V$ is a morphism $h : (A \cup X)^* \to A^*$ that acts as the identity on $A$ and satisfies $h(U) = h(V)$; $h$ is called the assignment to the variables of the equation. For instance, $U =x_1 \ta \tb x_2$ and $V = \ta x_1 x_2 \tb$ define the equation $x_1 \ta \tb x_2 = \ta x_1 x_2 \tb$,
whose solutions are the morphisms $h$ with $h(x_1) = \ta^k$, for $k\geq 0$, and $h(x_2) = \tb^\ell$, for $\ell \geq 0$. An equation is {\em satisfiable} (in $A^*$) if it admits a solution $h: (A \cup X)^* \to A^*$. A set (or system) of equations is satisfiable if there exists an assignment of the variables of the equations in this set that is a solution for all equations. In logical terms, word equations are often investigated as fragments of the first order theory $\FO(A^*,\cdot)$ of strings. \citet{karhumaki2000} showed that deciding the satisfiability of a system of word equations, 
that is, checking the truth of formulas from the existential theory $\Sigma_1$ of $\FO(A^*,\cdot)$,
can be reduced to deciding the satisfiability of a (more complex) single word equation that encodes the respective system. 

The existential theory of word equations (simply called theory of word equations, if not mentioned otherwise) has been studied for decades in
mathematics and theoretical computer science with a particular focus on the decidability of the satisfiability of logical formulae defined over word equations. 
In 1946, \citet{quine1946} proved that the first-order 
theory of word equations is equivalent to the first-order theory of arithmetic, which is known to be
undecidable. In order to solve Hilbert's tenth problem \cite{hilbert1900} in the negative,
Markov showed a reduction from word equations to Diophantine equations (see \cite{lothaire,makanin1977} and the references therein), in the~hopes that word equations would prove to be undecidable. 
However, \citet{makanin1977} proved in 1977 that the satisfiability of word equations \emph{is}
decidable. Though Markov's approach was unsuccessful, a related idea 
can be tried again based on extended theories of word equations. \citet{matiyasevich1968} showed in 1968  
a reduction from the more powerful theory~of word equations with linear length constraints  (i.e., linear relations between word lengths) to Diophantine equations. Whether this theory is decidable remains a major open~problem.

After Makanin showed that the satisfiability of word equations is decidable, the focus shifted towards identifying the complexity of 
deciding the satisfiability of an equation. After a series of intermediate results \cite{lothaire}, 
\citet{plandowski1999} showed that this problem is in PSPACE. In a series of recent papers \cite{jez2013,jez2017}, Je\.z
applied a new technique called recompression to word equations to first simplify the existing 
proof that the satisfiability of word equations can be decided in polynomial space, and then to show that
this can actually be decided in linear space. However, there is a mismatch between the aforementioned upper bounds and the only known lower bound: solving word equations is NP-hard. 

In recent years, deciding the satisfiability of systems of word equations has also become an important problem in fields such as formal
verification and security where string solvers such as HAMPI \cite{kiezun2009}, CVC4 \cite{barrett2011}, Stranger \cite{yu2010}, ABC~\cite{aydin2015}, Norn \cite{abdulla2015}, S3P \cite{trinh2016} and Z3str3 \cite{berzish2017}
have become more popular. However, in practice more functionality than just word equations is required in many cases, so solvers often extend 
the theory of word equations with certain functions (e.g., linear arithmetic over the length, replace-all, extract, reverse, etc.) and predicates (e.g., numeric-string conversion predicate, regular-expression membership, etc.). Due to the complexity of solving word 
equations and undecidability of many of these extensions, none of these solvers have a complete algorithm. To this end, for example, the extension of word equations with a $replace-all$ operator was shown to be undecidable in \cite{lin2016}.

In \cite{karhumaki2000} the authors introduce the notion of languages expressible by word equations as, intuitively, the set of solutions that an equation may have. 
It is immediate that the satisfiability problem for systems of word equations whose variables are constrained by expressible languages is decidable. However, in many extensions that 
are used in conjunction with classical word equations (both in practical and theoretical settings) the constraints are not expressible by word equations. 
To this end, we can mention regular (or rational) constraints, constraints based on involutions (such as the mirror image), or length constraints, none of which are expressible \cite{buchi1990,karhumaki2000}.
As mentioned above, whether the theory of word equations enhanced with a length function is decidable is still a major open problem. But on the other hand, the satisfiability of word equations with regular constraints \cite{lothaire} or with involutions \cite{diekert2016} is decidable in both cases.

In this setting, our work aims to provide a better understanding of the boundary between extensions of the theory of word equations for which satisfiability is decidable or, respectively, undecidable.

\paragraph{Our Contributions:} On the one hand, we show that for a series of natural and practically interesting extensions of word equations, the satisfiability problem is undecidable. On the other hand, we address the decidability of the theory of word equations with length constraints, and show for some classes of word equations with restricted forms and length constraints the satisfiability problem is decidable. We also prove several expressibility results that shed light on the relative power of word equations vis-a-vis other kinds of formal language representations such as regular expressions and context-free grammars.

Our first result is related to expressibility. As noted before, many simple constraints are not expressible by systems of satisfiable word equations (sat-equations), but can be easily expressed by requiring that some equations are unsatisfiable (unsat-equations). For instance, if one wants to define the set of words of the form $\ta w \tb w \tc$ where $w$ is a string that contains no symbol $\tc$, this can be specified by requiring the equation $x_1=\ta x_2 \tb x_2 \tc$ to be satisfiable and the equation $x_2=x_3\tc x_4$ to be unsatisfiable, i.e., not true for any assignment of the variables $x_3$ and $x_4$. It is an easy exercise to show, using the techniques in \cite{karhumaki2000}, that $\{\ta w \tb w \tc\mid w$ contains no symbol $\tc\}$ is not expressible by word equations. 
We are interested whether the satisfiability of systems of sat- and unsat-equations is decidable. In this setting, one is given two sets of equations that may share variables: the set of sat-equations and the set of unsat-equations; both sets might also contain negated equations. One has to decide whether there exists an assignment of the variables occurring in the sat-equations that satisfies this entire set, such that no matter what way we assign the rest of the variables at least one of the unsat-equations is not satisfied. We show that this gives an alternative characterization of the the $\Sigma_2$ fragment of $\FO(A^*,\cdot)$, i.e., the fragment of $\exists\forall$ quantified first order formulae over word equations. Thus, the satisfiability of such systems is undecidable. To obtain these results, we show that deciding the truth of $\Sigma_2$ formulae is equivalent to deciding the truth of a formula consisting of a single $\exists\forall$-quantified negated equation, which, at its turn, can be encoded as the satisfiability of a system of sat- and unsat-equations. As the Inclusion of Pattern Languages problem (see~\cite{dominik,jiang}) can be encoded as such a system as well, it follows that $\Sigma_2$ is undecidable even when the alphabet of terminals is of size~$2$. This result is complemented by the fact that deciding the truth of formulae from the \emph{positive} $\Sigma_2$ fragment of $\FO(A^*,\cdot)$ (i.e., $\exists\forall$ quantified formulae obtained by iteratively applying only conjunction and disjunction to word equations of the form $U=V$) is decidable. This series of observations is strongly related to the work of \cite{quine1946,Vijay_HVC,Durnev}, in which it was shown that the validity of sentences from the positive $\Pi_2$ fragment of $\FO(A^*,\cdot)$ (i.e., the quantifier alternation was, in that case, $\forall\exists$) is undecidable, as well as to the results of \cite{russians} in which it was shown that the truth of arbitrarily quantified positive formulae over word equations is decidable \emph{over an infinite alphabet of terminals}. Note that our positive result does not contradict those in the aforementioned papers. Indeed, when trying to check whether a $\Pi_2$ formula over $A^*$ is valid, one can reduce this to checking whether a $\Sigma_2$ formula over arbitrary word equations is true over $A^*$. However, the resulting formula may contain negated equations (that is, $U\neq V$ atoms), so it would not have the required form in our decidability result. In fact, as soon as we allow universally quantified negated equations in the $\Sigma_2$ formulae, we obtain an undecidable fragment. Also, our positive result does not follow from \cite{russians}, where the requirement that the alphabet of terminals is infinite was crucial.

Our second line of results presents a series of undecidability results for
the $\Sigma_1$ fragment of $\FO(A^*,\cdot)$, the theory of word equations,
extended with simple predicates or functions. We show that adding to word equations either length constraints and a function that maps a string to its integer value, or, alternatively, just constraints imposing that two strings have the same number of occurrences of two fixed letters, leads to an undecidable theory. Also, the same holds when we extend the theory of word equations with constraints requesting that two word are abelian equivalent (they have the same Parikh vector), or with constraints imposing that a string is the morphic image of another one, etc. These results are related to the study of theories of quantifier free word equations constrained by very simple relations, see~\cite{buchi1990} for instance. While our results do not settle the decidability of the theory of word equations with length constraints, they give the intuitive idea that the theory of word equations enhanced with predicates providing very little control on the combinatorial structure of the solutions of the equation (and not necessarily with any control on the length) becomes undecidable. 

We also show the following positive results. Firstly, the satisfiability of quantifier free positive formulae over word equations with linear length constraints, in which we have only one terminal (occurring zero, one, or multiple times) and no restriction on the usage of variables, is decidable, and, moreover, NP-complete, no matter the alphabet over which we search for the solutions; the decidability is preserved when considering positive $\Sigma_2$ formulae of this kind. To this end, we also show that if we allow negated equations in our quantifier free formulae (so arbitrary $\Sigma_1$ formulae), we obtain a theory that is decidable if and only if the general theory of equations with length constraints is decidable. Thus, the study of equations with only one terminal seems motivated to us, despite their simple structure. Secondly, the satisfiability of quantifier free equations with linear length constraints, which have a strictly regular-ordered form (each variable occurs exactly once in each side, and the order in which the variables occur is the same) is decidable, even when we add regular constraints. We also show that, in the latter case, if the regular constraints are given by DFAs, the satisfiability problem is NP-complete.

The first positive result mentioned above is connected to the study of constant-free word equations, for which the existence of parametrisable solutions was thoroughly investigated (see, e.g., \cite{Budkina,DirkTCS2003,SaarelaDLT}). We show that equations with a single terminal symbol (occurring several times) always admit a certain type of structurally simple solutions, which allows for a reformulation of their satisfiability problem into an integer linear programming problem, so both the decidability and complexity results follows. The second result is related to the investigations initiated in \cite{DLT2016,MFCS2017}, in which the authors were interested in the complexity of solving equations of restricted form. In the most significant result of \cite{MFCS2017}, it was shown that deciding the satisfiability of strictly regular-ordered equations (with or without regular constraints) is NP-complete, which makes this class of word equations one of simplest known classes of word equations that are hard to solve. It seems interesting to us that the NP-completeness of the satisfiability problem is preserved for regular-ordered equations with linear length constraints. However, our proof does not seem to scale to less restricted classes of equations. 

\paragraph{Organization:}
The organization of the paper is as follows. In Section~\ref{sec:preliminaries}
we introduce the basic notions involved in the problem of solving word equations. In Section~\ref{sec:unsat} we present the series of observations regarding systems of sat- and unsat-equations, as well as those related to the fragment $\Sigma_2$ of the first order theory of word equations. In Section~\ref{sec:undec} we present a series of undecidable extensions of the quantifier-free theory of word equations, while in Section~\ref{sec:dec} we present the decidable cases. We conclude with Section~\ref{sec:hierarchy}, where we present a map of the results of this paper, emphasizing the steps we took towards delineating the boundary between decidability and undecidability in this context.

\section{Preliminaries}
\label{sec:preliminaries}
Let $A$ be an alphabet of letters (or symbols). We denote by $A^*$ the set of all words over $A$; by $\varepsilon$ we denote the empty word. Note that $A^*$ is a monoid w.r.t. the concatenation of words. Let $|w|$ denote the length of a word $w$. 
For $1\leq i\leq j\leq |w|$ we denote by $w[i]$ the letter on the $i$\textsuperscript{th} position of $w$. A word $w$ is $p$-periodic for $p\in \mathbb{N}$ ($p$ is called a period of~$w$) if $w[i]=w[i+p]$ for all $1\leq i\leq |w|-p$; the smallest period of a word is called its period. 
Let $w=v_1v_2v_3$ for some words $v_1,v_2,v_3\in A^*$, then $v_1$ is called prefix of $w$, $v_1,v_2,v_3$ are factors of $w$, and $v_3$ is a suffix of $w$. Two words $w$ and $u$ are called conjugate if there exist non-empty words $v_1,v_2$ such that $w=v_1v_2$ and $u=v_2v_1$. A word $v\in A^*$ is a subword of $w\in A^*$ if $v=v_1\ldots v_k$, with $v_i\in A^*$, and $w=u_0v_1u_1\cdots v_ku_k$, with $u_i\in A^*$. A word $z\in A^*$ is in the shuffle of $u,v\in A^*$, denoted $z\in x\shuf y$, if $z=u_1v_1\cdots u_kv_k,$ with $u_i,v_i\in A^*$, and $u=u_1\cdots u_k$, $v=v_1\cdots v_k$. Two words $u,v\in A^*$ are abelian equivalent if $|u|_a=|v|_a$, for all $a\in A$.

The following lemma is well known (see, e.g., \cite{lothaire}). 
\begin{lemma}[Commutativity Equation]\label{lem:commute}
Let $v_1, v_2 \in A^*$. Then $v_1v_2 = v_2v_1$ if and only if there exists $w \in A^*$ and $p,q \in \mathbb{N}_0$ such that $v_1 = w^p$ and $v_2 = w^q$.
\end{lemma}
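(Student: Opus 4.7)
The plan is to establish the nontrivial direction by strong induction on $|v_1| + |v_2|$, with the converse direction being an immediate calculation. For the converse, if $v_1 = w^p$ and $v_2 = w^q$ then $v_1 v_2 = w^{p+q} = w^{q+p} = v_2 v_1$, so this direction requires no further work.

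For the forward direction, I induct on $n = |v_1| + |v_2|$. In the base case, if either $v_1$ or $v_2$ is empty (in particular if $n = 0$), one can choose $w$ to be the other word and take $p, q \in \{0,1\}$ appropriately; note that $\MN_0$ includes $0$, so the degenerate cases are covered. Assume now both words are nonempty and, without loss of generality, $|v_1| \leq |v_2|$. Comparing the prefixes of length $|v_1|$ of the two sides of $v_1 v_2 = v_2 v_1$, one sees that $v_1$ is a prefix of $v_2$, so there exists $u \in A^*$ with $v_2 = v_1 u$. Substituting into the original equation yields $v_1 v_1 u = v_1 u v_1$, and cancelling $v_1$ on the left gives $v_1 u = u v_1$.

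Since $|v_1| + |u| = |v_1| + |v_2| - |v_1| < n$ (using that $v_1$ is nonempty), the inductive hypothesis applies to the pair $(v_1, u)$, yielding some $w \in A^*$ and $p, r \in \MN_0$ with $v_1 = w^p$ and $u = w^r$. Then $v_2 = v_1 u = w^{p+r}$, so setting $q = p + r$ completes the step.

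The main (and only real) obstacle is verifying that $v_1$ is a prefix of $v_2$ from the equality $v_1 v_2 = v_2 v_1$ together with $|v_1|\leq |v_2|$, which is a routine comparison of positions but is the combinatorial heart of the argument; everything else reduces to length bookkeeping and applying induction. The proof gives no control on $w$ being primitive, but the statement does not require this, so no additional work is needed.
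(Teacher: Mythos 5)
Your proof is correct. Note that the paper offers no proof of its own here---it states the lemma as well known and cites Lothaire---and your argument (strong induction on $|v_1|+|v_2|$: dispose of the case where a word is empty, otherwise read off that $v_1$ is a prefix of $v_2$ from the length-$|v_1|$ prefixes of both sides, write $v_2 = v_1 u$, cancel $v_1$ to get $v_1 u = u v_1$, and apply the inductive hypothesis to the strictly shorter pair) is exactly the classical textbook proof from that source, so there is nothing further to compare.
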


Let $A=\{\ta,\tb,\tc,\ldots\}$ be a finite alphabet of \emph{constants} and let $X = \{x_1, x_2, x_3, \ldots\}$ be an alphabet of \emph{variables}. Note that we assume $X$ and $A$ are disjoint, and unless stated otherwise, that $|A|\geq 2$. A word $\alpha \in (A \cup X)^*$ is usually called a {\em pattern}. For a pattern $\alpha$ and a letter $z \in A \cup X$, let $|\alpha|_z$ denote the number of occurrences of $z$ in~$\alpha$; $\var(\alpha)$ denotes the set of variables from $X$ occurring in $\alpha$. A morphism $h : (A \cup X)^* \to A^*$ with $h(a) = a$ for every $a \in A$ is called a \emph{substitution}. We say that  $\alpha \in (A \cup X)^*$ is \emph{regular} if, for every $x \in \var(\alpha)$, we have $|\alpha|_{x} = 1$; e.\,g., $\ta x_1 \ta x_2 \tc x_3 x_4 \tb$ is regular. Note that $L(\alpha) = \{h(\alpha) \mid h \text{ is a substitution}\}$ (the pattern language of $\alpha$) is regular when $\alpha$ is regular. 

A (positive) \emph{word equation} is a tuple $(U,V) \in (A \cup X)^* \times (A \cup X)^*$; we usually denote such an equation by $U = V$, where $U$ is the left hand side (LHS, for short) and $V$ the right hand side (RHS) of the equation. A negative word equation is the negation of a word equation, i.e., $\lnot (U=v)$ or $U\neq V$.

A \emph{solution} to an equation $U= V$ (respectively, $U\neq V$), over an alphabet $A$, is a substitution $h$ mapping the variables of $UV$ to words from $A^*$ such that $h(U) = h(V)$ (respectively, $h(U)\neq h(V)$). $h(U)$ is called the \emph{solution word} and the length of a solution $h$ of the equation $U=V$ is $|h(U)|$. A solution of shortest length to an equation is called minimal. Note that we might ask whether a positive or negative equation has a solution over an alphabet larger than the alphabet of terminals that actually occur in the respective equation. A word equation is \emph{satisfiable} over $A$ if it has a solution over $A$, and the \emph{satisfiability problem} is to decide for a given word equation whether or not it is satisfiable in some given alphabet $A$. 

We briefly recall the results of \citet{karhumaki2000}. In~\cite{karhumaki2000} it is shown that, given two equations $E$ and $E'$, one can construct the equations $E_1$, $E_2$, and $E_3$ that are satisfiable if and only if $E\land E'$, $E\lor E'$, $\lnot E$ are, respectively, satisfiable. In this construction, $E_1$ contains exactly the variables of $E$ and $E'$, while in $E_2$ and $E_3$ new variables are added with respect to those in the given equations. We use this result to show that for every quantifier free first order formula over word equations we can construct a single equation that may contain extra variables and terminals, and is satisfiable if and only if the initial formula was satisfiable. Moreover, the values the variables of the initial equations may take in the satisfying assignments of the new equation are exactly the same values they took in the satisfying assignments of the initial formula. We also use in several occasions the following result from~\cite{karhumaki2000}.
 \begin{lemma}\label{lem:conjunctions}
Let $U,V,U^\prime,V^\prime \in (X\cup A)^*$. Let $Z_1 = U \ta U^\prime U \tb U^\prime $ and $Z_2 = V \ta V^\prime V \tb V^\prime$. Then for any substitution $h : X^* \to A^*$, $h(Z_1) = h(Z_2)$ if and only if $h(U) = h(V)$ and $h(U^\prime) = h(V^\prime)$. 
\end{lemma}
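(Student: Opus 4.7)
The ``if'' direction is immediate: substituting $h$ into $Z_1$ and $Z_2$ and using the two assumed equalities makes both resulting words identical letter by letter. All the work is in the ``only if'' direction, which I would attack as follows. Set $u = h(U)$, $v = h(V)$, $u' = h(U')$, $v' = h(V')$, so the hypothesis reads
\[ u\,\ta\,u'\,u\,\tb\,u' \;=\; v\,\ta\,v'\,v\,\tb\,v' \]
as an equality in $A^*$. Comparing lengths gives $|u|+|u'| = |v|+|v'|$. The core reduction is to prove that in fact $|u|=|v|$: once that is known, directly comparing the first $|u|$ positions and the last $|u'|$ positions of each side yields $u=v$ and $u'=v'$.

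To show $|u|=|v|$ I would argue by contradiction, assuming WLOG that $|u|>|v|$. Then $v$ is a proper prefix of $u$, say $u = vx$ with $k := |x| = |u|-|v| > 0$, and symmetrically $|v'| = |u'|+k$. The strategy is to peel off characters whose identity is forced by the constants $\ta$ and $\tb$: cancelling the common prefix $v$ forces $x$ to begin with $\ta$ (since the RHS now starts with $\ta$), so $x = \ta y$; cancelling that $\ta$ and using $|y|=k-1<|v'|$ shows $y$ is a prefix of $v'$, hence $v' = yz$ with $|z|=|u'|+1$; a further matching step forces $z$ to begin with $\ta$, say $z = \ta z'$, and matching the resulting blocks of length $|u'|$ yields $u' = z'$. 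Substituting these relations back and cancelling $u'v$ from the front reduces the equation to
\[ \ta\,y\,\tb\,u' \;=\; \tb\,y\,\ta\,u', \]
whose leading characters violate $\ta\neq\tb$, giving the contradiction.

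The main obstacle is not conceptual but bookkeeping: each cancellation must be justified by showing the prefix to be removed really occurs on both sides (which always reduces to a length check), and edge cases such as $u'=\varepsilon$, $v=\varepsilon$, or $k=1$ (so $y=\varepsilon$) must be absorbed without a separate argument. Since every peeling step is forced by a single character equality between an explicit constant on one side and a particular position of a pattern image on the other, the chain of deductions is rigid and no tool beyond freeness of the monoid $A^*$ is needed.
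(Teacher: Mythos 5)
Your proof is correct, but there is nothing in the paper to diverge from: the paper does not prove Lemma~\ref{lem:conjunctions} at all, importing it directly from \cite{karhumaki2000}, so your cancellation argument serves as a self-contained elementary proof where the paper relies on citation. I checked the peeling chain in detail and it is rigid: writing $u=h(U)$ etc., length comparison gives $|v'|=|u'|+k$ where $u=vx$, $|x|=k>0$; cancelling $v$ forces $x=\ta y$ with $|y|=k-1<|u'|+k=|v'|$, so $v'=yz$ with $|z|=|u'|+1$; cancelling $v\ta y$ leaves $\ta u'u\tb u'=zv\tb v'$, forcing $z=\ta z'$ and then $z'=u'$ by comparing the length-$|u'|$ prefixes; substituting $u=v\ta y$ and $v'=y\ta u'$ and cancelling $u'v$ indeed yields $\ta y\tb u'=\tb y\ta u'$, whose first letters clash. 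Every cancellation is justified by a length inequality that holds for all parameter values, so the edge cases $y=\varepsilon$, $u'=\varepsilon$, $v=\varepsilon$ are absorbed exactly as you claim, and once $|u|=|v|$ is established, $u=v$ and $u'=v'$ follow by comparing prefixes and suffixes of matching lengths. The one hypothesis worth stating explicitly is $\ta\neq\tb$, which your final step uses and which is genuinely necessary: with $\ta=\tb$ the lemma is false (take $h(U)=\varepsilon$, $h(U')=\ta$, $h(V)=\ta$, $h(V')=\varepsilon$, so that both sides equal $\ta^4$); here it is covered by the paper's convention that $\ta,\tb$ denote distinct letters of $A$ and $|A|\geq 2$.
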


In Section \ref{sec:dec} of this paper we address equations with restricted form. A word equation $U=V$ is regular if both $U$ and $V$ are regular patterns. We call a regular equation {\em ordered} if the order in which the variables occur in both sides of the equation is the same; that is, if $x$ and $y$ are variables occurring both in $U$ and $V$, then $x$ occurs before $y$ in $U$ if and only if $x$ occurs before $y$ in $V$. Moreover, we say a regular-ordered equation is {\em strict} if each variable occurs in both sides. For instance $x_1\ta x_2x_3 \tb =x_1\ta x_2 \tb x_3$ is strictly regular-ordered while $x_1 \ta = x_1 x_2$ is regular-ordered (but not strictly since $x_2$ occurs only on one side) and $x_1\ta x_3x_2 \tb =x_1\ta x_2 \tb x_3$ is regular but not regular-ordered.

The results of the last section also consider equations with regular constraints and linear length constraints defined as follows. Given a word equation $U = V$, a set of {\em linear length constraints} is a system $\theta$ of linear Diophantine equations where the unknowns correspond to the lengths of possible substitutions of each variable $x \in X$. Moreover, given a variable $x \in X$, a {\em regular constraint} is a regular language $L_x$ given by a finite automata. The satisfiability of word equations with linear length and/or regular constraints is the question of whether a solution $h$ exists satisfying the system $\theta$ and/or such that $h(x) \in L_x$ for each $x \in X$. 



\section{Systems of Sat- and Unsat-Equations}
\label{sec:unsat}
\newcommand{\mS}{{\mathcal S}}
\newcommand{\mU}{{\mathcal U}}

We begin by introducing the main concept of this section. Let us assume for the rest of this section that we only work with equations over an alphabet $A$ with at least $2$ letters.
\begin{definition}
Let $A$ be an alphabet of constants, $|A|\geq 2$, and $X$ and $Y$ two disjoint alphabets of variables. Let $\mS=\{e_1,\ldots,e_n\}$ and $\mU=\{f_1,\ldots,f_m\}$ be two finite sets where each $e_i$ is either $U_i=V_i$ or $\lnot (U_i=V_i)$ for some $U_i,V_i\in (A\cup X)^*$, and $f_i$ is either $U'_i=V'_i$ or $\lnot (U'_i=V'_i)$,
for some $U'_i,V'_i\in (A\cup X\cup Y)^*$. We say that $\mS$ and $\mU$ define a system of sat- and unsat-equations over $A$, denoted $(\mS, \mU)$. 

$(\mS, \mU)$ is satisfiable over $A$ if there exists an assignment of the variables from $X$ to words from $A^*$, that satisfies all $e_i\in \mS$, with $1\leq i\leq n$, and for all assignments of the variables of $Y$ at least one of $f_j$ is not satisfied, for $1\leq j\leq m$. 
\end{definition}

Essentially, the class of systems of sat- and unsat-equations extends the existential theory of word equations by adding the possibility to express some undesirable properties of the solutions of these equations via unsat-equations (as exemplified in Section \ref{sec:intro}). Our first results show that deciding the satisfiability of systems of sat- and unsat-equations over $A$ is equivalent to deciding the truth of some very simple $\Sigma_2$ formulae in $A^*$.  
\begin{lemma}\label{lem:sys1}
Let $(\mS,\mU)$ be a system of sat- and unsat-equations over $A$, with $X=\{x_1,\ldots,x_n\}$ the variables occurring in $\mS$ and $Y=\{y_1,\ldots,y_m\}$ be the variables occurring only in $\mU$. Then there exists a $\Sigma_2$ formula 
\begin{align*}\phi=\exists x_1,\ldots,x_t. \forall y_1,\ldots,y_m. e \land (f_1\lor \ldots \lor f_p), 
\end{align*}
where $e$ is a positive equation with variables from $\{x_1,\ldots,x_t\}$, with $t\geq n$, and $f_1,\ldots ,f_p$ are (positive and negative) equations with variables from $\{x_1,\ldots,x_n,y_1,\ldots,y_m\}$, such that $\phi'$
holds in $A^*$ if and only if $(\mS,\mU)$ is satisfiable. 
\end{lemma}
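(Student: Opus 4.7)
The plan is to translate the semantic definition of satisfiability of $(\mS,\mU)$ directly into a first-order formula over $A^*$, and then to massage it into the required $\Sigma_2$ shape using two tools: the Karhum\"aki et al.\ single-equation reduction (summarised in the paragraph preceding Lemma~\ref{lem:conjunctions}) together with elementary De Morgan and quantifier manipulation. First I would unpack the satisfiability condition for $(\mS,\mU)$ as the validity in $A^*$ of
\[
\exists x_1,\ldots,x_n.\ \Bigl(\bigwedge_{i=1}^{n} e_i\Bigr)\;\wedge\;\forall y_1,\ldots,y_m.\ \Bigl(\bigvee_{j=1}^{m}\lnot f_j\Bigr),
\]
where the disjunction on the right expresses, by De Morgan, that not every $f_j$ can be simultaneously satisfied. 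If $f_j$ is the positive equation $U'_j=V'_j$ then $\lnot f_j$ is the negative equation $U'_j\neq V'_j$, and vice versa; in either case $\lnot f_j$ is a (positive or negative) word equation with variables from $X\cup Y$. So taking $p=m$ and relabelling $\lnot f_j$ as $f_j$, the right-hand conjunct already has the atomic shape demanded by the statement.

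Next I would apply the Karhum\"aki et al.\ construction, iterated over the quantifier-free formula $\bigwedge_{i=1}^{n} e_i$ (a Boolean combination, using only $\land$ and $\lnot$, of positive equations in variables $X$). By combining the $\land$-case (which is Lemma~\ref{lem:conjunctions}) with the $\lnot$-case, one obtains a single \emph{positive} equation $e$ over an enlarged variable set $\{x_1,\ldots,x_t\}$, with $t\ge n$ and the fresh variables $x_{n+1},\ldots,x_t$ chosen disjoint from $Y$, satisfying the key property that an assignment of $x_1,\ldots,x_n$ satisfies $\bigwedge_i e_i$ if and only if it extends to an assignment of $x_1,\ldots,x_t$ satisfying $e$, i.e.\ $\bigwedge_i e_i \;\Longleftrightarrow\; \exists x_{n+1},\ldots,x_t.\,e$. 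Substituting this into the formula above and using that $e$ contains no $y_k$ while no $f_j$ contains any $x_i$ with $i>n$, I can merge the two existential blocks and pull the universal prefix $\forall y_1,\ldots,y_m$ outside $e$, arriving at
\[
\phi\;=\;\exists x_1,\ldots,x_t.\ \forall y_1,\ldots,y_m.\ e\,\wedge\,(f_1\vee\cdots\vee f_p),
\]
which has exactly the shape required.

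The main obstacle is the uniform correctness of the single-equation reduction in the second step: the iterated application of the $\land$- and $\lnot$-cases must preserve the \emph{exact} set of realisations of the original variables $x_1,\ldots,x_n$, so that the existential quantification over the newly introduced $x_{n+1},\ldots,x_t$ faithfully captures $\bigwedge_i e_i$. This is precisely the semantic property of the Karhum\"aki et al.\ construction explicitly recalled before Lemma~\ref{lem:conjunctions}, so the work is essentially bookkeeping: one has to pick the fresh variables introduced at each iteration to avoid collisions among themselves and with any $y_k$, which can always be done by renaming.
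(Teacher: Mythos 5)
Your proposal is correct and takes essentially the same route as the paper: both write down the natural $\Sigma_2$ formula $\exists x_1,\ldots,x_n.\forall y_1,\ldots,y_m.\, e_1\land\cdots\land e_a\land(\lnot f_1\lor\cdots\lor\lnot f_b)$ expressing satisfiability of $(\mathcal{S},\mathcal{U})$ and then invoke the Karhum\"aki et al.\ reduction to collapse the quantifier-free combination of sat-equations into a single positive equation $e$, with fresh existential variables $x_{n+1},\ldots,x_t$ appearing exactly when some $e_i$ is negative. Your extra bookkeeping (freshness of the new variables with respect to $Y$, merging the existential blocks, and pulling the universal prefix across $e$) merely makes explicit what the paper leaves implicit in its appeal to the solution-preserving property of that construction.
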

\begin{proof}
Assume $\mS=\{e_1,\ldots,e_a\}$ and $\mU=\{f_1,\ldots,f_b\}$. Let 
\begin{align*}\phi=\exists x_1,\ldots,x_n.\forall y_1,\ldots y_m. e_1\land \ldots \land e_a\land (\lnot f_1 \lor \ldots \lnot f_b)
\end{align*}
It is immediate that $\phi$ is true in $A^*$ if and only if $(\mS,\mU)$ is satisfiable. According to \cite{karhumaki2000}, we can reduce $\phi$ to a formula 
\begin{align*}\phi'=\exists x_1,\ldots,x_n,x_{n+1},& \ldots,x_{t}.\forall y_1,\ldots y_m. \\
& e\land (\lnot f_1 \lor \ldots \lor \lnot f_b)
\end{align*}
where $e$ is a single word equation $U=V$, $U,V\in (A\cup X')^*$ with $ X'=\{x_1,\ldots,x_t\}$. If all the equations $e_i$ are positive then $t=n$ (so no new variables are added), while if at least one of $e_i$ is negative then $t>n$. 
 \end{proof}
We can also prove the following converse result.
\begin{lemma}\label{lem:sys2}
Let $\phi$ be a $\Sigma_2$ formula
\begin{align*}
\phi=\exists x_1,\ldots,x_n. \forall y_1,\ldots,y_m. e \land (f_1\lor \ldots \lor f_p),\end{align*}
where $e$ is a positive equation with variables from $\{x_1,\ldots,x_n\}$, and $f_i$ is either $U_i=V_i$ or $\lnot (U_i=V_i)$, for some $U_i,V_i\in (A\cup \{x_1,\ldots,x_n,y_1,\ldots,y_m\})^*$. Then there exists a system $(\mS,\mU)$ over $A$ that is satisfiable if and only if $\phi$ holds in $A^*$. 
\end{lemma}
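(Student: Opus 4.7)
The plan is to exploit the logical duality hidden in the definition of satisfiability of $(\mS,\mU)$: any assignment witnessing satisfiability must make every $e_i \in \mS$ hold, while making the conjunction $\bigwedge_{f_j \in \mU} f_j$ fail for every extension to $Y$. The latter is precisely $\forall Y. \bigvee_{f_j \in \mU} \neg f_j$, where $\neg f$ denotes swapping a positive equation with its negation and vice versa. Hence $(\mS,\mU)$ is satisfiable if and only if
\[
\exists X.\; \bigwedge_{e_i \in \mS} e_i \;\land\; \forall Y.\; \bigvee_{f_j \in \mU} \neg f_j
\]
holds in $A^*$. This is exactly the shape of a formula of the kind considered in the lemma, and it suggests the construction: given $\phi = \exists X. \forall Y.\; e \land (f_1 \lor \ldots \lor f_p)$, I would rewrite it (using that $e$ is independent of $Y$) as $\exists X.\; e \land \forall Y.\; (f_1 \lor \ldots \lor f_p)$, place the positive equation $e$ into $\mS$, and place $\overline{f_1}, \ldots, \overline{f_p}$ into $\mU$, where $\overline{f_j}$ denotes the complement of $f_j$ (so $\overline{U=V}$ is $\lnot(U=V)$, and $\overline{\lnot(U=V)}$ is $U=V$). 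By the duality above, this $(\mS,\mU)$ is satisfiable over $A$ if and only if $\phi$ is true in $A^*$.

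Before concluding, one technicality must be dispatched. The definition of $(\mS,\mU)$ classifies a variable $x_i$ as belonging to the existential set $X$ only if it actually appears somewhere in $\mS$; variables occurring only in $\mU$ are implicitly universal. It could therefore happen that some $x_i$ from the quantifier prefix of $\phi$ occurs only inside some $f_j$ and not in $e$, which would push it to the wrong side of the quantifier alternation. To force every $x_i$ into $\mS$, I would replace $e = U = V$ with $e^\prime = U \cdot x_1 \cdots x_n = V \cdot x_1 \cdots x_n$; a direct cancellation argument shows that the solution sets of $e$ and $e^\prime$ coincide, while now each $x_i$ occurs in $\mS$. The variables occurring only in $\mU$ are then exactly $\{y_1,\ldots,y_m\}$, any of these that fails to occur in $\mU$ being harmlessly dropped (quantification over unused variables is vacuous).

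The main obstacle is essentially this variable-scoping bookkeeping; once it is handled, the verification that $(\mS,\mU)$ is satisfiable iff $\phi$ is true in $A^*$ is immediate from the duality identified in the first paragraph, and no combinatorics on words beyond trivial cancellation is required. In particular, no new variables or terminals need to be introduced beyond those already in $\phi$, in contrast to the reverse direction handled by Lemma~\ref{lem:sys1}.
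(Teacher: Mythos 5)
Your proposal is correct and takes essentially the same route as the paper: place $e$ in $\mS$, place the complemented atoms $\lnot f_j$ in $\mU$, and observe that satisfiability of $(\mS,\mU)$ unfolds by duality to exactly $\exists x_1,\ldots,x_n.\; e \land \forall y_1,\ldots,y_m.\,(f_1\lor\ldots\lor f_p)$. The only (equally valid) difference is your fix for the variable-scoping issue: the paper adds the trivial equations $x_i = x_i$ to $\mS$, whereas you append $x_1\cdots x_n$ to both sides of $e$ and invoke right-cancellation in the free monoid.
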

\begin{proof}Let $(\mS,\mU)$ be defined as follows:
\begin{align*}\mS &= \{e\} \cup\{x_i=x_i\mid 1\leq i\leq n\}\\
\mU &= \{\lnot f_i\mid 1\leq i\leq p \}.
\end{align*}
It is immediate that $(\mS,\mU)$ is satisfiable if and only if $\phi$ holds in $A^*$. Note that we added the equations $x_i=x_i$ in $\mS$ to ensure that their existential quantification is preserved when trying to solve the system.
\end{proof}

Therefore, each system of sat- and unsat- equations is equivalent to a $\Sigma_2$ formula of very restricted form. As a consequence, we get that the \emph{Inclusion of Pattern Languages} problem (IPL, for short) can be encoded by the satisfiability problem for a system of sat- and unsat-equations. In IPL, one is given two patterns $\alpha \in (A\cup X)^*$ and $\beta \in (A\cup Y)^*$, where $A$ is an alphabet of constants with at least two distinct letters and $X$ and $Y$ are disjoint sets of variables, and has to decide whether $L(\alpha)\subseteq L(\beta)$. 

\begin{theorem}\label{IPL}
Deciding IPL for $\alpha\in (A\cup \{x_1,\ldots,x_n\})^*$ and $\beta\in (A\cup \{y_1,\ldots,y_m\})^*$ can be reduced to deciding whether the following formula holds or not in $A^*$:
\[\exists x_1,\ldots,x_n. \forall y_1,\ldots,y_m. \alpha \neq \beta.\]
\end{theorem}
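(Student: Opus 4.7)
The plan is to show that the $\Sigma_2$ formula
\[
\psi \;:=\; \exists x_1,\ldots,x_n.\,\forall y_1,\ldots,y_m.\, \alpha\neq\beta
\]
is true in $A^*$ if and only if $L(\alpha)\not\subseteq L(\beta)$. Since IPL asks the complementary question, an oracle deciding the truth of $\psi$ immediately decides IPL, which yields the reduction claimed in the statement.

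First I would unpack the semantics using the standing disjointness of the variable sets. Because $\var(\alpha)\subseteq X=\{x_1,\ldots,x_n\}$ and $\var(\beta)\subseteq Y=\{y_1,\ldots,y_m\}$, any substitution $h:(A\cup X\cup Y)^*\to A^*$ splits into independent restrictions $h_X$ and $h_Y$: the value $h(\alpha)$ depends only on $h_X$, and $h(\beta)$ depends only on $h_Y$. Thus the $\exists$ and $\forall$ blocks in $\psi$ range over independent substitutions of $\alpha$ and of $\beta$, respectively, and the equation inside $\psi$ becomes a comparison between one element of $L(\alpha)$ and one element of $L(\beta)$.

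Next I would translate $\psi$ into language-theoretic terms. The truth of $\psi$ in $A^*$ is equivalent to the existence of some substitution $h_X:X^*\to A^*$ such that, for every substitution $h_Y:Y^*\to A^*$, we have $h_X(\alpha)\neq h_Y(\beta)$. Since $L(\beta)=\{h_Y(\beta)\mid h_Y\text{ is a substitution}\}$, the inner universally quantified inequality is precisely the assertion $h_X(\alpha)\notin L(\beta)$. Hence $\psi$ holds iff there is an $h_X$ with $h_X(\alpha)\in L(\alpha)\setminus L(\beta)$, i.e., iff $L(\alpha)\not\subseteq L(\beta)$. The reduction is then obtained by observing that $L(\alpha)\subseteq L(\beta)$ holds iff $\psi$ is false; this mapping is clearly computable (in fact, $\psi$ is built in linear time directly from $\alpha$ and $\beta$).

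There is essentially no obstacle: the argument is a direct semantic unwinding of $\Sigma_2$ quantification in terms of the pattern-language definition. The only delicate point is the disjointness of $X$ and $Y$, which is a standing hypothesis of IPL; without it, one could not freely choose the $h_Y$-part of the assignment after fixing the $h_X$-part, and the formula would no longer correspond to non-inclusion of pattern languages.
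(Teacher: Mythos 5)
Your proof is correct and takes essentially the same route as the paper: the paper observes that $L(\alpha)\subseteq L(\beta)$ holds iff the $\Pi_2$ sentence $\forall x_1,\ldots,x_n.\,\exists y_1,\ldots,y_m.\ \alpha=\beta$ is true in $A^*$ and then negates it, whereas you unwind the negated $\Sigma_2$ form directly, with both arguments hinging on the same point, namely that the disjointness of $\var(\alpha)$ and $\var(\beta)$ lets the two quantifier blocks range over independent substitutions matching the definitions of $L(\alpha)$ and $L(\beta)$. If anything, your version makes explicit the splitting of a substitution into its $X$- and $Y$-restrictions and the identification of the inner universal inequality with $h_X(\alpha)\notin L(\beta)$, which the paper leaves implicit.
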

\begin{proof}
Let $\alpha$ and $\beta$ be the input patterns for IPL. Assume $\alpha=w_0x_1w_2\cdots x_nw_n$ and $\beta=v_0y_1v_2\cdots y_mv_m$ with $X=\{x_1,\ldots,x_n\}$ and $Y=\{y_1,\ldots,y_m\}$ sets of variables and $w_i,v_j\in A^*$ for $1\leq i\leq n,1\leq j\leq m$. Then, $L(\alpha)\subseteq L(\beta)$ if and only if the following formula over word equations is true in $A^*$ 
\begin{align*}
\forall x_1,\ldots,x_n. \exists y_1,\ldots,y_m.w_0x_1w_2\cdots x_nw_n =v_0y_1v_2\cdots y_mv_m
\end{align*}

But this formula is true in $A^*$ if and only if the following formula is false in $A^*$:
\begin{align*}
\exists x_1,\ldots,x_n. \forall y_1,\ldots,y_m.w_0x_1w_2\cdots x_nw_n \neq v_0y_1v_2\cdots y_mv_m
\end{align*}

Checking whether this formula is true (or false) is equivalent, according to Lemmas \ref{lem:sys1} and \ref{lem:sys2} to checking whether a system of sat- and unsat-equations is satisfiable. 
\end{proof}

Theorem \ref{IPL} and Lemma \ref{lem:sys2} shows that deciding IPL for the patterns $\alpha$ and $\beta$ is reducible to solving a system of sat- and unsat-equations $(\mS,\mU)$ with $\mS$ containing only trivial equations $x=x$,  for all variables $x$ occurring in $\alpha$, and $\mU$ a single positive equation $\alpha=\beta$. As IPL is undecidable for terminal alphabets of size $2$ or more, this immediately shows that checking the satisfiability of systems of sat- and unsat-equations is undecidable, over alphabets of size at least $2$.

We are now ready to prove that deciding in general the satisfiability of $\Sigma_2$ formulae over $A^*$ is equivalent to checking the truth value of a formula $\exists x_1,\ldots,x_n. \forall y_1,\ldots,y_m. U\neq V$ in $A^*$, with $U,V$ patterns (whose sets of variables are, however, not necessarily disjoint, as in IPL).
\begin{theorem}\label{sigma2collapse}
For every formula  $\phi$ in the $\Sigma_2$ fragment of $FO(A^*,\cdot)$ we can construct a formula 
\[\psi=\exists x_1,\ldots,x_n. \forall y_1,\ldots,y_m. U\neq V, \]
with $U,V\in (A\cup\{x_1,\ldots,x_n,y_1,\ldots,y_m\})^*$, such that $\phi$ holds in $A^*$ if and only if $\psi$ holds in $A^*$.
\end{theorem}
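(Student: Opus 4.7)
The plan is to push negations inward so that the entire quantifier-free matrix becomes a single equation, and then to re-negate once to obtain a single disequation. Assume $\phi$ is in prenex normal form,
\[\phi=\exists x_1,\ldots,x_n.\,\forall y_1,\ldots,y_m.\,\phi_0(x_1,\ldots,x_n,y_1,\ldots,y_m),\]
where $\phi_0$ is a quantifier-free Boolean combination of word equations (with possibly both positive and negative atoms). Using the elementary equivalence $\forall \bar{y}.\,\phi_0 \equiv \lnot(\exists \bar{y}.\,\lnot\phi_0)$, the formula $\phi$ holds in $A^*$ if and only if $\exists \bar{x}.\,\lnot(\exists \bar{y}.\,\lnot\phi_0)$ holds in $A^*$, so it suffices to encode $\exists \bar{y}.\,\lnot\phi_0$ as a single existentially quantified positive equation.

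For this, I would invoke the encoding result of \citet{karhumaki2000} recalled in Section~\ref{sec:preliminaries}: any quantifier-free Boolean combination of equations and negated equations over variables $\bar{x}\cup\bar{y}$ can be replaced by a single positive equation $U=V$ in extra auxiliary variables $z_1,\ldots,z_k$ (and possibly extra terminals), with the crucial property that the set of satisfying assignments projected onto the original variables is preserved. Applied to $\lnot\phi_0$, this produces $U,V$ such that, for every assignment of $x_1,\ldots,x_n,y_1,\ldots,y_m$ in $A^*$, $\lnot\phi_0$ is satisfied precisely when some choice of $z_1,\ldots,z_k$ makes $U=V$ hold. Hence $\exists \bar{y}.\,\lnot\phi_0$ is equivalent in $A^*$ to $\exists \bar{y},\bar{z}.\,U=V$, and therefore
\[\phi \;\equiv\; \exists \bar{x}.\,\lnot(\exists \bar{y},\bar{z}.\,U=V) \;\equiv\; \exists x_1,\ldots,x_n.\,\forall y_1,\ldots,y_m,z_1,\ldots,z_k.\,U\neq V.\]
After renaming the $z_\ell$ as additional universally quantified $y$-variables, this is exactly the required formula $\psi$.

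The only place anything nontrivial happens is the projection-preserving encoding of a Boolean combination as a single equation. The conjunction case follows directly from Lemma~\ref{lem:conjunctions}, which plainly involves no new variables. The disjunction and negation gadgets of \cite{karhumaki2000} do introduce auxiliary variables, and here one must check that those variables are constrained purely by the intended semantics and do not accidentally impose additional restrictions on the original $\bar{x},\bar{y}$; this is the main technical obstacle, but it is precisely the property already established in Section~\ref{sec:preliminaries}. Beyond this invocation, the argument is pure quantifier and negation bookkeeping.
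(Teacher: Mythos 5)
Your proof is correct and follows essentially the same route as the paper's: both negate so that the quantifier-free matrix $\lnot\phi_0$ can be collapsed into a single positive equation via the projection-preserving encoding of \citet{karhumaki2000}, whose auxiliary existential variables become universally quantified upon re-negation, yielding $\exists \bar{x}.\,\forall \bar{y},\bar{z}.\,U\neq V$. The only cosmetic difference is that you negate the inner $\forall$-block while the paper negates the whole formula $\phi$ and argues via falsity of $\lnot\phi$; the bookkeeping is identical.
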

\begin{proof}
The formula $\phi $ holds in $A^*$ if and only if $\neg \phi$ is false in $A^*$. If $\phi = \exists x_1,\ldots,x_n.\forall y_1,\ldots,y_\ell. \phi',$ we get 
\[\neg \phi = \forall x_1,\ldots,x_n.\exists y_1,\ldots,y_\ell. \neg \phi'.\]
Since $\neg \phi '$ is a quantifier free first order formula over word equations, by \cite{karhumaki2000}, we can construct an equation $U=V$ over an extended set of variables ($y_{\ell+1},\ldots,y_m$ are the newly added variables), such that $\neg \phi$ is false in $A^*$ if and only if 
$\forall x_1,\ldots,x_n.\exists y_1,\ldots,y_m. U=V$
is also false in~$A^*$. 

Finally, let $\psi=\neg (\forall x_1,\ldots,x_n.\exists y_1,\ldots,y_m. U=V)$. Thus:
\[\psi = \exists  x_1,\ldots,x_n.\forall y_1,\ldots,y_m. U\neq V.\]
The formula $\psi$ holds in $A^*$ if and only if $\neg \phi$ is false, so if and only if $\phi$ holds in $A^*$.
\end{proof}

The result in Theorem \ref{sigma2collapse} seems somehow surprising to us, as it shows that checking the truth of an arbitrary $\Sigma_2$ formula reduces to checking the truth of a single negative equation ($\exists\forall$-quantified). Note that applying the results of \citet{karhumaki2000} to the initial arbitrary formula would have only lead to an $\exists\forall\exists$ quantified positive equation, so no longer a $\Sigma_2$ formula. We also get, as a consequence, that checking the truth of an arbitrary $\Sigma_2$ formula in $A^*$ is equivalent to solving a system of sat- and unsat-equations over $A^*$. So, the framework we defined here provides an alternative characterisation for the $\Sigma_2$ fragment of $FO(A^*,\cdot)$. 

By Theorem \ref{sigma2collapse} and Lemma \ref{lem:sys2}, solving systems of sat- and unsat-equations $(\mS,\mU)$ with $\mU$ containing at least one positive equation is undecidable. Let us see the status of such systems where $\mU$ contains only negative equations. 

Let $\Sigma^+_2$ be the \emph{positive} $\Sigma_2$ fragment of $\FO(A^*,\cdot)$ (i.e., $\exists\forall$-quantified formulae obtained by iteratively applying conjunction and disjunction to word equations of the form $U=V$). By Lemma \ref{lem:sys1} solving a system $(\mS,\mU)$ with $\mU$ containing only negative equations can be reduced to checking the truth of a $\Sigma^+_2$ equation in $A^*$.  The converse also holds.
\begin{lemma}\label{lem:sys3}
Let $\phi$ be a $\Sigma^+_2$ formula from $FO(A^*,\cdot)$. Then there exists a system $(\mS,\mU)$, with $\mU$ containing only negative equations, that is satisfiable if and only if $\phi$ holds in $A^*$. 
\end{lemma}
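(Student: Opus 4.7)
The plan is to mirror the construction of Lemma~\ref{lem:sys2} but to take advantage of the fact that the matrix of $\phi$ is now built only from positive equations. The only non-trivial point is converting an arbitrary positive Boolean combination into a disjunction of positive equations without introducing auxiliary variables (which, if introduced under the universal quantifier, would destroy the intended meaning).

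First, write $\phi = \exists x_1,\ldots,x_n.\,\forall y_1,\ldots,y_m.\,\psi$, where $\psi$ is a positive Boolean combination of positive equations with variables from $\{x_1,\ldots,x_n,y_1,\ldots,y_m\}$. I would rewrite $\psi$ in disjunctive normal form, obtaining
\[
\psi \;\equiv\; \bigvee_{i=1}^{p}\bigwedge_{j=1}^{q_i}\bigl(U_{ij}=V_{ij}\bigr).
\]
Then I would apply Lemma~\ref{lem:conjunctions} iteratively to each inner conjunction: note that this lemma introduces two terminals $\ta,\tb$ but no new variables. Hence each conjunction $\bigwedge_{j=1}^{q_i}(U_{ij}=V_{ij})$ is equivalent, over $A^*$, to a single positive word equation $g_i$ with variables in $\{x_1,\ldots,x_n,y_1,\ldots,y_m\}$. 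Consequently $\phi$ is equivalent in $A^*$ to
\[
\phi' \;=\; \exists x_1,\ldots,x_n.\,\forall y_1,\ldots,y_m.\,\bigl(g_1\lor \cdots \lor g_p\bigr).
\]

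Next I would define the system $(\mS,\mU)$ by
\begin{align*}
\mS &= \{\, x_i = x_i \mid 1\leq i\leq n\,\},\\
\mU &= \{\,\lnot g_i \mid 1\leq i\leq p\,\},
\end{align*}
where the trivial sat-equations in $\mS$ serve only to declare that $x_1,\ldots,x_n$ are the existentially quantified variables, exactly as in the proof of Lemma~\ref{lem:sys2}. By construction, every member of $\mU$ is a negative equation, as required.

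Finally I would verify the equivalence: any assignment of the $x_i$ satisfies $\mS$ trivially, while the condition ``for all assignments of the $y_j$, at least one equation of $\mU$ fails'' unfolds to ``for all assignments of the $y_j$, at least one of the $g_i$ holds'', which is precisely the matrix of $\phi'$. Thus $(\mS,\mU)$ is satisfiable iff $\phi'$, hence $\phi$, holds in $A^*$. The main obstacle is ensuring the conjunction collapse is done without adding fresh variables (especially any that would fall under the $\forall$); this is exactly what Lemma~\ref{lem:conjunctions} guarantees, so everything else is bookkeeping. The DNF step can cause an exponential blow-up, but this is irrelevant for the stated (decidability-style) equivalence.
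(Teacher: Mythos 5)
Your proof is correct and takes essentially the same route as the paper's: bring $\phi$ into disjunctive normal form, collapse each conjunction of positive equations into a single positive equation via Lemma~\ref{lem:conjunctions} (crucially, without introducing new variables that would fall under the universal quantifier), and then convert the resulting disjunction of equations into a system of sat- and unsat-equations. The only difference is cosmetic: the paper concludes by invoking Lemma~\ref{lem:sys2}, whereas you inline that lemma's construction (trivial sat-equations $x_i = x_i$ and unsat-equations $\lnot g_i$) and verify the equivalence directly.
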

\begin{proof}
First we bring $\phi$ to the disjunctive normal form. We obtain a formula $\phi'=\exists x_1,\ldots,x_n.\forall y_1,\ldots y_m. c_1\lor \ldots \lor c_p$, where each $c_i$, with $1\leq i\leq p$, is a conjunction of factors of the form $U=V$ where $U,V$ are patterns over the variable alphabet $\{x_1,\ldots,x_n,y_1,\ldots,y_m\}$. By Lemma \ref{lem:conjunctions} we obtain that $\phi'$ can be further rewritten as a formula $\phi''=\exists x_1,\ldots,x_n.\forall y_1,\ldots y_m. e_1\lor \ldots \lor e_\ell $ where each $e_i$ is a word equation $U_i=V_i$ with variables from $\{x_1,\ldots,x_n,y_1,\ldots,y_m\}$. The result follows now by Lemma \ref{lem:sys2}.
\end{proof}

In the following, we show that the truth of $\Sigma^+_2$ formulae over $A$, and, consequently, of systems $(\mS,\mU)$ of sat- and unsat-equations with $\mU$ consisting only of negative word equations, is decidable. We need the following lemma.

\begin{lemma}\label{lem:trivialeqs}
Let $Y = \{ y_1,y_2,\ldots, y_n \} \subseteq X$ and let $U,V \in (Y \cup A)^*$. Let $k > |UV|$ and let $h : X^* \to A^*$ be the substitution such that $h(y_i) = \ta\tb^{k+i}\ta$. Then $h(U) = h(V)$ if and only if $U = V$ (the strings $U$ and $V$ coincide). 
\end{lemma}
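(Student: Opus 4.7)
The direction $U=V \Rightarrow h(U)=h(V)$ is immediate, so the task is the converse. My plan is to read off the variable sequence of $U$ from long $\tb$-blocks appearing inside $h(U)$, and then use positional alignment to recover the intervening constant blocks.

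Write $U = u_0 z_1 u_1 \cdots z_p u_p$ with $z_s \in Y$ and $u_s \in A^*$, and similarly $V = v_0 w_1 v_1 \cdots w_q v_q$. For $z_s = y_{\sigma_s}$, the factor $h(z_s) = \ta \tb^{k+\sigma_s} \ta$ begins and ends with $\ta$, so the central $\tb$-block of length $k+\sigma_s$ is never fused with any $\tb$'s contributed by the neighbouring $u_{s-1}$ or $u_s$. Hence the maximal $\tb$-runs occurring in $h(U)$ split into two groups: (i) the $p$ runs of respective lengths $k+\sigma_1, \ldots, k+\sigma_p$, each coming from a single variable occurrence and appearing in that order; and (ii) runs contained entirely in some $u_s$, hence of length at most $|u_s| \leq |U| < k$. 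Since $k > |UV|$, the long $\tb$-runs (those of length strictly greater than $|UV|$) in $h(U)$ are in a length- and order-preserving bijection with the variable occurrences of $U$, and the same is true for $h(V)$.

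Comparing the two words via $h(U)=h(V)$ then forces $p=q$ and $\sigma_s = \tau_s$ for all $s$ (where $w_s = y_{\tau_s}$), so the variable sequences of $U$ and $V$ coincide. The positions of the long runs must also agree, so the constant strips around them satisfy $u_0 \ta = v_0 \ta$ (the prefix before the first long run), $\ta u_s \ta = \ta v_s \ta$ (the strip strictly between consecutive long runs), and $\ta u_p = \ta v_p$ (the suffix after the last long run). Cancelling the flanking $\ta$'s gives $u_s = v_s$ for every $s$, and therefore $U = V$. The only delicate point is the non-merging of $\tb$-runs claimed in step~(i), but that is purely syntactic and follows from each $h(y_i)$ being bordered by $\ta$'s; no combinatorics-on-words lemma (e.g.\ Lemma~\ref{lem:commute}) is needed, and the degenerate cases $p=0$ or $q=0$ are handled by the same alignment argument.
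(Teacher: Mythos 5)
Your proof is correct, but it takes a genuinely different route from the paper's. The paper argues \emph{locally}, by induction on the longest common prefix of $U$ and $V$: writing $U = WxU^\prime$ and $V = Wx^\prime V^\prime$, it does a case analysis on whether $x,x^\prime$ are terminals or variables, using that $\ta\tb^{k+i}\ta$ is a prefix of $\ta\tb^{k+j}\ta$ only when $i=j$, and, in the crucial mixed case (terminal vs.\ variable), that $\tb^{k+i}$ would have to be a prefix of a purely terminal block $z$ with $|z| \leq |UV| < k$, a contradiction. You argue \emph{globally}: the images $h(y_i) = \ta\tb^{k+i}\ta$ act as unambiguous markers, so the maximal $\tb$-runs of length exceeding $|UV|$ in the common solution word are in an order- and length-preserving bijection with the variable occurrences, and matching their positions recovers the constant strips. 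Your counting facts check out: variable-derived runs have length $k+\sigma_s \geq k+1 > |UV|$ and are flanked by $\ta$ on both sides (so they never merge with terminal-contributed $\tb$'s, and adjacent variable occurrences stay separated by $\ta\ta$), while any run inside a constant block $u_s$ has length at most $|u_s| \leq |UV|$; the strip identities $u_0\ta = v_0\ta$, $\ta u_s \ta = \ta v_s \ta$, and $\ta u_p = \ta v_p$ cancel correctly, and the degenerate case with no variables on one side is excluded because the other side's image would then contain a long run. What your approach buys is a one-shot, induction-free argument that in fact proves something slightly stronger, namely that $h$ is injective on all of $(Y \cup A)^*$, since the factorization of any image is uniquely decodable from its long $\tb$-runs; what the paper's prefix induction buys is a more local, template-like case analysis that adapts directly to other substitutions for which only prefix comparisons (rather than a global run structure) are available.
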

\begin{proof}
The if direction is trivial. Suppose that $h(U) = h(V)$. Note that if $U$ is a proper prefix of $V$, then $h(U)$ is a proper prefix of $h(V)$ and vice-versa. Thus we can assume that $U$ is not a prefix of $V$ and $V$ is not a prefix of $U$. We shall proceed by induction on prefixes of $U$ and $V$. In particular, suppose that $U$ and $V$ have a common prefix $W$. Note that this holds for the base case $W = \varepsilon$. If $W = U = V$, we are done. Otherwise, since neither is a prefix of the other, there exist $x,x^\prime \in Y\cup A$ and $U^\prime, V^\prime$ such that $U = WxU^\prime$ and $V= Wx^\prime V^\prime$. This implies that $h(xU^\prime) = h(x^\prime V^\prime)$, and in particular that either $h(x)$ is a prefix of $h(x^\prime)$ or vice-versa. Wlog. consider the first case. If $x, x^\prime \in A$, then it is immediate that $x = x^\prime$. Similarly, since $\ta\tb^{k+i}\ta$ is only a prefix of $\ta\tb^{k+j}\tb$ if $i=j$, if $x,x^\prime \in Y$ it also follows that $x = x^\prime$. Since $h(x)$ is a prefix of $h(x^\prime)$, it cannot be that $x \in Y$ while $x^\prime \in A$, so the remaining case is when $x \in A$ and $x^\prime \in \{y_1,y_2,\ldots, y_n\}$, in which case we must have that $x = \ta$ since $\ta$ is the first letter of $h(x^\prime)$. Suppose this holds and let $U^\prime = zU^{\prime\prime}$ such that $z$ is the longest prefix of $U^\prime$ consisting only of terminal symbols. Then $h(U) = h(W)\ta z h(U^{\prime\prime})$ and $h(V) = h(W)\ta\tb^{k+i}\ta h(V^\prime)$ for some $\ta \in A$ and $i \in [1,n]$.  Since $h(y_j)$ starts with $\ta$ for all $j \in [1,n]$, either $h(U^{\prime\prime}) = \varepsilon$ or $h(U^{\prime\prime})$ has $\ta$ as a prefix. Consequently, $\tb^{k+i}$ must be a prefix of $z$. However, since $k>|UV| \geq |z|$, this is a contradiction, and we must have $x = x^\prime$. By induction, it follows that $U=V$.
\end{proof}

We can now show the next theorem.

\begin{theorem}\label{sigma2+}
The truth of $\Sigma^+_2$ formulae over $A^*$ is decidable.
\end{theorem}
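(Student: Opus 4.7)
The plan is to reduce the decision problem to a finite collection of ordinary word-equation satisfiability problems over an enlarged alphabet, each of which is decidable by Makanin's theorem. First, exactly as in the proof of Lemma~\ref{lem:sys3}, I would normalize the given $\Sigma^+_2$ formula $\phi = \exists x_1,\ldots,x_n.\forall y_1,\ldots,y_m.\psi$ by converting $\psi$ into disjunctive normal form and then collapsing each conjunction of positive equations into a single word equation via Lemma~\ref{lem:conjunctions}. This produces an equivalent formula of the form
\[
\phi' = \exists x_1,\ldots,x_n.\,\forall y_1,\ldots,y_m.\,\bigvee_{i=1}^{\ell}(U_i = V_i),
\]
where each $U_i, V_i \in (A \cup \{x_1,\ldots,x_n,y_1,\ldots,y_m\})^*$.

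The heart of the argument is the following equivalence: $\phi'$ holds in $A^*$ if and only if there exist a substitution $\sigma : \{x_1,\ldots,x_n\} \to A^*$ and an index $i \in \{1,\ldots,\ell\}$ such that $\sigma(U_i) = \sigma(V_i)$ \emph{as strings in} $(A \cup \{y_1,\ldots,y_m\})^*$, i.e., treating the $y$-variables as formal additional constants. The backward direction is immediate, because literal string equality is preserved by every further substitution $\tau : \{y_1,\ldots,y_m\} \to A^*$. For the forward direction, suppose $\sigma$ witnesses $\phi'$. Choose $k > \max_i |\sigma(U_i)\sigma(V_i)|$ and let $\tau$ be the ``generic'' substitution $\tau(y_j) = \ta\tb^{k+j}\ta$ from Lemma~\ref{lem:trivialeqs} (the assumption $|A|\geq 2$ is used here). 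Since $\phi'$ holds, some disjunct satisfies $\tau(\sigma(U_i)) = \tau(\sigma(V_i))$; applying Lemma~\ref{lem:trivialeqs} to the words $\sigma(U_i), \sigma(V_i) \in (A \cup \{y_1,\ldots,y_m\})^*$ then forces the literal equality $\sigma(U_i) = \sigma(V_i)$ in $(A \cup \{y_1,\ldots,y_m\})^*$.

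To conclude, for each fixed $i$, deciding whether there exists $\sigma$ with $\sigma(U_i) = \sigma(V_i)$ in $(A\cup \{y_1,\ldots,y_m\})^*$ is a single word-equation satisfiability instance over the enlarged terminal alphabet $A \cup \{y_1,\ldots,y_m\}$ with variables $x_1,\ldots,x_n$, and is therefore decidable by Makanin's theorem. Iterating over the finitely many disjuncts and taking the disjunction of the outcomes yields the decision procedure. The main subtlety lies in the forward direction of the key equivalence: the generic $\tau$ depends on the candidate $\sigma$ (through the threshold $k$), but this is harmless precisely because the universal quantifier ranges over \emph{all} $\tau$, so in particular over one that is sufficiently generic for that specific $\sigma$. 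No further combinatorial work on word equations is needed, as the hard part of the proof has been packaged into Lemma~\ref{lem:trivialeqs} and Makanin's decidability result.
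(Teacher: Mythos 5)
Your normalization and your key equivalence are exactly the paper's argument: the paper likewise passes to disjunctive normal form, collapses each conjunction into a single equation via Lemma~\ref{lem:conjunctions}, and proves (contrapositively, using the generic substitution $y_j \mapsto \ta\tb^{k+j}\ta$ of Lemma~\ref{lem:trivialeqs}) that the sentence holds if and only if some assignment of $x_1,\ldots,x_n$ over $A^*$ turns one disjunct into a \emph{trivial} equation over the $y$'s. The gap is in your final step. Deciding ``there exists $\sigma : \{x_1,\ldots,x_n\} \to A^*$ with $\sigma(U_i) = \sigma(V_i)$ literally in $(A \cup \{y_1,\ldots,y_m\})^*$'' is \emph{not} the same problem as satisfiability of the word equation $U_i = V_i$ over the enlarged terminal alphabet $A \cup \{y_1,\ldots,y_m\}$: in the latter, Makanin's algorithm searches over all substitutions $\sigma : \{x_1,\ldots,x_n\} \to (A \cup \{y_1,\ldots,y_m\})^*$, so the $x$-variables may take values containing the letters $y_j$, which your key equivalence (correctly) forbids. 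The mismatch is fatal: take $\phi = \exists x.\,\forall y.\; x = y$, which is false over $A^*$ since $|A| \geq 2$; yet the equation $x = y$ over the terminal alphabet $A \cup \{y\}$ is satisfiable via $x \mapsto y$, so your procedure would answer ``true''. Note that the standard trick of projecting a solution over a larger alphabet down to a smaller one does not apply here, because the letters $y_j$ occur as constants in $U_i = V_i$ and hence must be fixed by any such projection.

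The repair is easy, in either of two ways. (a) Keep your enlarged-alphabet formulation but add the regular constraint $\sigma(x_i) \in A^*$ for every variable; word equations with regular constraints are decidable (Schulz's extension of Makanin, which the paper itself cites), and with $L_{x_i} = A^*$ your reduction becomes correct. (b) Follow the paper's concluding step: if $\sigma(U_i)$ and $\sigma(V_i)$ coincide as strings over $A \cup \{y_1,\ldots,y_m\}$ with $\sigma$ ranging over $A^*$, then the $y$-skeletons of $U_i$ and $V_i$ must already agree syntactically (same number of $y$-occurrences, same variables in the same order) --- a check that immediately kills examples like $x = y$ --- and what remains is whether $\sigma$ solves the system $u_0 = v_0,\, u_1 = v_1,\, \ldots,\, u_p = v_p$ obtained by splitting both sides at the $y$-occurrences. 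That system has variables among $x_1,\ldots,x_n$ and terminals in $A$ only, so plain Makanin (or recompression) suffices, with no regular constraints needed. Everything in your proposal before this last step is sound and matches the paper; only the discharge of the triviality test needs one of these fixes.
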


\begin{proof}
Wlog.\,we may assume we have a sentence in disjunctive normal form as follows:
\begin{equation} \label{eqn:dnf}
  \begin{aligned}
	& \;\exists x_1,x_2,\ldots, x_n. \forall y_1,y_2,\ldots, y_m. (e_{1,1} \land \ldots \land e_{1,k_1}) \;\lor \\ 
    & (e_{2,1} \land \ldots \land e_{2,k_2}) \lor \ldots \lor (e_{t,1} \land \ldots \land e_{t,k_t}),
  \end{aligned}
\end{equation}
where $e_{i,j}$ are individual word equations over the variables $x_1,x_2,\ldots,x_n,y_1,y_2,\ldots,y_m$ and terminal symbols from $A$. By Lemma~\ref{lem:conjunctions}, we can, for each $i$, $1\leq i \leq t$, combine the equations $e_{i,1},e_{i,2},\ldots,e_{i,k_t}$ into a single equation $E_i$ without introducing any new variables. Thus, we get equations $E_1, E_2,\ldots, E_t$ such that (\ref{eqn:dnf}) is satisfiable if and only if 
\begin{equation} \label{eqn:disjunctions}
	\; \exists x_1,x_2,\ldots, x_n. \forall y_1,y_2,\ldots, y_m. E_1 \lor E_2 \lor \ldots \lor E_t 
\end{equation}
is satisfiable. 
Now, we claim that (\ref{eqn:disjunctions}) is satisfiable if and only if there exist values for $x_1,x_2,\ldots,x_n$ such that at least one of $E_1,E_2,\ldots,E_t$ becomes a trivial equation over the variables $y_1,y_2,\ldots,y_m$. It is clear that if such a substitution exists, the sentence is satsifiable. For the other direction, suppose that for any choice of $x_1,x_2,\ldots,x_n$, all the equations $E_i$, $1\leq i \leq t$ remain non-trivial (i.e. they are of the form $U_i = V_i$ with $U_i,V_i \in \{y_1,y_2,\ldots,y_m\}\cup A)^*$ such that $U_i \not = V_i$. Then by Lemma~\ref{lem:trivialeqs}, there exists a choice of $y_1,y_2,\ldots,y_m$ such that $U_i \not= V_i$ for all $i, 1\leq i \leq t$, and thus the sentence is false.
Therefore, to decide whether (2) is satisfiable, it is sufficient to decide, for each $i$, $1\leq i \leq t$, whether there exists a choice of $x_1,x_2,\ldots,x_n$ such that $E_i$ becomes a trivial equation. Suppose $E_i$ is the equation 
\[u_0 y_{i_1} u_1 y_{i_2} u_2 \ldots y_{i_p} u_p = v_0 y_{j_1} v_1 y_{j_2} \ldots y_{j_q} v_q\]
where $p,q\in \mathbb{N}_0$, $i_k,j_\ell \in [1,m]$ for $1 \leq k \leq p$ and $1\leq \ell \leq q$ respectively, and $u_k,v_\ell \in (\{x_1,x_2,\ldots,x_n\}\cup A)^*$ for $1\leq k \leq p$ and $1 \leq \ell \leq q$ respectively. Note that for a given choice of values for $x_1,x_2,\ldots,x_n$, the equation $E_i$ becomes trivial if and only if $p=q$, and $u_0 = v_0, u_1 = v_1, \ldots, u_p = v_p$. In other words, if $x_1,x_2,\ldots,x_n$ forms a solution to the system of equations $u_0 = v_0, u_1 = v_1, \ldots, u_p = v_p$ over the variables $x_1,x_2,\ldots,x_n$ and terminal symbols from $A$. It is well known that determining whether such a system has a solution is decidable (e.g., by Makanin's algorithm or by recompression) and hence the satisfiability of~(1) is decidable as required.\end{proof}

In conclusion, systems of sat- and unsat-equations exactly characterise the class of $\Sigma_2$ formulae over word equations. They strictly extend the fragment $\Sigma^+_2$, of positive $\Sigma_2$ formulae over word equations, which is decidable. It is interesting how the class of formulae that encode IPL can be compared to $\Sigma_2$ and $\Sigma^+_2$. For instance, are they strictly less powerful than $\Sigma_2$? 

Note that, as the fragment $\Sigma_2$ of $FO(A^*,\cdot)$ is undecidable, and, according to \cite{karhumaki2000}, every formula contained in this fragment can be expressed as a $\Sigma^+_3$ formula ($\exists\forall\exists$ quantified), it follows that the fragment $\Sigma^+_3$ of $FO(A^*,\cdot)$ is undecidable. Thus, Theorem \ref{sigma2+} is, in a sense, optimal.

\section{Undecidability results}
\label{sec:undec}

In the following section, we consider several undecidable extensions of word equations. 

Let $T_S$ denote the existential first-order two-sorted theory (with sorts $nat$ and $str$, respectively, for numbers and strings) consisting of string equations, a length function for strings, linear arithmetic over numbers, and a string-number conversion predicate (denoted as $strnum$). This predicate checks, for a given binary string $z$ and a number $x$, whether $z$ is the binary representation of $x$. $T_S$ is expressive enough that most string-related library functions from C, C++, Java, PHP, and JavaScript can be easily encoded in terms of its functions and predicates. 

Following B\"uchi and Senger, we define the first-order existential power arithmetic theory $T_P$ with the signature $\left< \mathbb{N}, 0, 1, +, P \right>$ where $P$ is a 3-ary relation defined by $(p, x, y)\in P$ if and only if $p = x \times 2^{y}$. We also define the predicate $P(\cdot,\cdot,\cdot)$ which returns true iff the argument-tuple belongs to $P$. In order to prove that $T_S$ is undecidable, we give a reduction from $T_P$, which was shown to be undecidable~\cite{buchi1990}.
\begin{theorem}[\citet{buchi1990}] \label{thm:TPUndecidable}
  $T_P$ is undecidable.
\end{theorem}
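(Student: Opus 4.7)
The theorem is attributed to Büchi and Senger, so the cleanest route is simply to cite \cite{buchi1990}; but if I had to prove it from scratch, my plan would be to reduce from Hilbert's tenth problem. By the MRDP theorem, the existential theory of $\langle \mathbb{N}, 0, 1, +, \cdot \rangle$ is undecidable, so it suffices to show that the graph of multiplication, $\{(m,n,k) : k = m\cdot n\}$, is existentially definable in the language $\langle \mathbb{N}, 0, 1, +, P\rangle$. Addition and the constants are already present, so the whole argument reduces to bootstrapping $P$ into unrestricted multiplication.

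The first step is to notice that $P$ already supplies multiplication by arbitrary powers of $2$: $P(p,x,y)$ asserts $p = x \cdot 2^y$, so in particular $\mathrm{Pow}_2(z) := \exists y.\, P(z,1,y)$ defines the powers of $2$, and $z = 2^y$ is captured by $P(z,1,y)$. To climb from there to full multiplication I would mimic the grade-school algorithm: if $n = \sum_{i=0}^{\ell-1} b_i 2^i$ with $b_i\in\{0,1\}$, then $m \cdot n = \sum_i b_i m\, 2^i$. The trick is to realize this sum existentially, without an explicit induction on $\ell$, by packing all shifted copies of $m$ into a single number whose base-$B$ digits record the partial products, where $B$ is a power of $2$ chosen large enough that no carry crosses a digit boundary. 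The predicate $P$ then lets us both extract a digit (by shifting and reducing modulo a smaller power of $2$, using $+$ to express the modular condition) and reassemble the digits into $k$.

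The main obstacle is getting the encoding exactly right: one needs a formula that (i) quantifies a single bound $B = 2^N$ larger than $m$, $n$, and $k$ in a first-order-expressible way, (ii) forces the sequence of bits of $n$ to match an auxiliary packed number $A$ through applications of $P$ and $+$, and (iii) expresses that $k$ is the carry-free sum of the appropriate shifts of $m$ as read off from $A$. Once this is done, every Diophantine equation translates into an existential $T_P$-formula, and undecidability of Hilbert's tenth problem transfers to $T_P$. I would expect this coding step to be the technically delicate part, exactly as in Büchi's original argument, and for the purposes of the present paper I would in fact just invoke \cite{buchi1990} as a black box.
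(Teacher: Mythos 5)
Your proposal matches the paper exactly: Theorem~\ref{thm:TPUndecidable} is stated as an imported result, and the paper offers no proof of its own, relying entirely on the citation to B\"uchi and Senger~\cite{buchi1990}, which is precisely what you recommend doing. Your supplementary sketch (MRDP plus existential definability of multiplication from $+$ and $P$ via carry-free digit-packing) goes beyond anything the paper contains but is consistent with the standard definability route, so no further comparison is needed.
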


An easy corollary of Theorem~\ref{thm:TPUndecidable} is the undecidability of a
variant of $T_P$ (which we denote as $T_{P,bin}$) where all numbers
are represented in binary representation. All functions and predicates
of $T_P$ are easily reinterpreted appropriately in $T_{p,bin}$. In
particular, $P(p,x,y)$ can be interpreted as the
equation $p_{bin} = x_{bin}0^y$ where $p_{bin}$ is the binary representation of the number $p$
and $x_{bin}$ is the binary representation of the number $x$.

\begin{theorem}
  The satisfiability problem of the first-order existential theory
  $T_S$ is undecidable.
\end{theorem}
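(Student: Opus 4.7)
The plan is to reduce $T_{P,bin}$ to $T_S$. Since $T_S$ already contains string equations, a length function, linear arithmetic, and the $strnum$ predicate, every atom in the signature of $T_{P,bin}$ is directly interpretable in $T_S$ \emph{except} for the power predicate $P(p,x,y)$. By the remark preceding the theorem statement, this predicate is equivalent to asserting that $p_{bin} = x_{bin}0^y$, so the entire reduction reduces to giving a faithful encoding of this single atom using string equations, length constraints, and $strnum$.

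For each number variable $n$ appearing in the input $T_{P,bin}$ formula, I introduce a fresh string variable $z_n$ together with the atom $strnum(z_n,n)$, which forces $z_n$ to be the (unique) binary representation of $n$. To encode $P(p,x,y)$, I introduce one more fresh string variable $w$ and assert the conjunction of (i) the word equation $z_p = z_x \cdot w$, (ii) the length constraint $|w| = y$, and (iii) the word equation $\mathtt{0}\cdot w = w\cdot \mathtt{0}$. By Lemma~\ref{lem:commute} applied to (iii), the solution $w$ must be a power of the single letter $\mathtt{0}$, so $w \in \mathtt{0}^*$; combined with (ii) this forces $w = \mathtt{0}^y$. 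Equation (i) together with the $strnum$ atoms then translates back to $p = x \cdot 2^y$, exactly matching the intended meaning of $P(p,x,y)$.

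The main obstacle, and the step requiring the most care, is the degenerate case $x = 0$. Because the canonical binary representation of $0$ is the one-letter string $\mathtt{0}$, the encoding above would force $z_p = \mathtt{0}\cdot w$ and, when $p = 0$, also $z_p = \mathtt{0}$, hence $w = \varepsilon$ and $y = 0$; this incorrectly excludes the valid triples $(0,0,y)$ with $y > 0$. To fix this, I would replace every occurrence of $P(p,x,y)$ by the disjunction $(x = 0 \land p = 0) \lor (x \neq 0 \land \Phi(p,x,y))$, where $\Phi$ is the conjunctive encoding from the previous paragraph; both disjuncts are expressible in $T_S$ since the existential theory admits conjunction and disjunction and linear arithmetic expresses $x = 0$ and $x \neq 0$. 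The resulting translation is a polynomial-time many-one reduction from $T_{P,bin}$ satisfiability to $T_S$ satisfiability, so the undecidability of $T_{P,bin}$ (obtained from Theorem~\ref{thm:TPUndecidable} via the preceding corollary) transfers to $T_S$.
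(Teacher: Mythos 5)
Your proof is correct and takes essentially the same approach as the paper: a reduction from $T_{P,bin}$ to $T_S$ in which $P(p,x,y)$ is encoded by introducing string variables for the binary representations via $strnum$, asserting the concatenation equation $z_p = z_x \cdot w$ with $len(w) = y$, and using the commutation equation $\mathtt{0}w = w\mathtt{0}$ (Lemma~\ref{lem:commute}) to force $w \in \mathtt{0}^*$. Your explicit case split for $x = 0$ is a careful refinement of a degenerate case the paper's proof passes over silently (it matters only if $strnum$ requires leading-zero-free representations), but it does not alter the underlying argument.
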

\begin{proof}
  We show that the decidability of $T_{P, bin}$ can be reduced to the
  decidability of $T_S$. Clearly, addition can be expressed
  in $T_S$. $P(p, x, y)$ is expressible in $T_S$ as follows:
  \begin{align*}
  \exists z:str. \exists x_s:str. &strnum(x_s, x) \land 0z = z0 \land len(z) = y \; \land \\
  &strnum(x_s \cdot z, p).
  \end{align*}
  
  In the above formula $0z$ (resp., $z0$) is the concatenation of $0$ and $z$ (resp. $z$ and $0$) as binary strings.
\end{proof}

Next, we show the undecidability of various extensions of the existential theory of word equations. In each case, undecidability is ultimately obtained by showing that, for a unary-style encoding of integers following~\cite{buchi1990} (where a number is represented using the length of a string in the form $a^* b$, so $\varepsilon$ is 0, $b$ is 1, etc.),
the additional predicate(s) can be used to derive a multiplication predicate $\Multiply(x,y,z)$ which decides for numbers $i,j,k$ encoded in this way (i.e., $x= \ta^{i-1}\tb, y=\ta^{j-1}\tb, z=\ta^{k-1}\tb$), whether $k = ij$. Since a corresponding addition predicate can easily be modelled for this encoding using only word equations, undecidabilty follows immediately.

The extensions are given as binary and 3-ary relations which may easily be interpreted as predicates.

\begin{definition}
Let $Eq_\ta$, $Eq_\tb$, $\Abelian$, $\Morphism$, $\Projection$, $\Subword \subset A^*\times A^*$ and $\Shuffle$, $\Insert$, $\Erase \subset A^*\times A^* \times A^*$ be the relations given by:
\begin{itemize}
\item $(x,y) \in Eq_\ta$ if and only if $|x|_\ta = |y|_\ta$, and $(x,y) \in Eq_\tb$ if and only if $|x|_\tb = |y|_\tb$,
\item $(x,y) \in \Abelian$ if and only if $x$ and $y$ are abelian-equivalent,
\item $(x,y) \in \Morphism$ if and only if there exists a morphism $h:A^* \to A^*$ such that $h(x) = y$,
\item $(x,y) \in \Projection$ if and only if there exists a projection $\pi: A^* \to A^*$ such that $\pi(x) = y$,
\item $(x,y) \in \Subword$ if and only if $x$ is a (scattered) subword of $y$.
\item $(x,y,z) \in \Shuffle$ if and only if $z \in x\shuf y$,
\item $(x,y,z) \in \Erase$ if and only if $z$ may be obtained from $x$ by removing some (or all) occurrences of $y$,
\item $(x,y,z) \in \Insert$ if and only if $z$ may be obtained from $x$ by inserting any number of occurrences of $y$.
\end{itemize}
For each of the above relations we can also define a predicate with the same name which returns true if the tuple of arguments belongs to the relation and false otherwise.
\end{definition}

Note that the membership problems for all the above relations are in NP, and therefore decidable. In some cases, our approach is simplified by reducing to predicates $Only{\ta}s(x,y)$ and $Only{\tb}s(x,y)$ which return true if and only if $y = \ta^{|x|_\ta}$ (respectively $y = \tb^{|x|_\tb}$). \citet{buchi1990} show how these predicates can easily be used to model multiplication, and thus undecidability follows.

\begin{theorem}[\citet{buchi1990}]
Given the predicates $Only{\ta}s(x,y)$ and $Only{\tb}s(x,y)$ it is possible to model multiplication. 
\end{theorem}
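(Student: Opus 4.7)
The plan is to build, given unary encodings $x = \ta^{i-1}\tb$, $y = \ta^{j-1}\tb$, $z = \ta^{k-1}\tb$ of $i,j,k \geq 1$, a finite conjunction of word equations and $Only{\ta}s/Only{\tb}s$ atoms (with a few fresh auxiliary variables) that is satisfiable if and only if $k = ij$, thereby defining $\Multiply(x,y,z)$. The backbone is a witness $w$ for the product, introduced via the commutation equation $wx = xw$: since $x$ contains a unique $\tb$ it is primitive, so Lemma~\ref{lem:commute} forces $w = x^r$ for some $r \geq 0$. Once $r$ is pinned to $j$, $w = x^j$ will expose $|w|_\ta = (i-1)j$ through the $\ta$-projection predicate, and comparing this against the $\ta$-count of $z$ (shifted by one $\ta$) will certify $k = ij$.

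The main obstacle is forcing $r = j$: the number $r$ lives as the $\tb$-count of $w$, while $j$ is presented inside $y$ using $\ta$'s, and the two projection predicates alone offer no direct cross-alphabet length comparison. My plan is to bridge this by manufacturing a $\tb$-only copy of $j$ through the primitive word $\ta\tb$. Concretely, I first form $y' := \ta^j$ by imposing $Only{\ta}s(y \cdot \ta,\, y')$. I then introduce $U$ satisfying $U(\ta\tb) = (\ta\tb)U$; primitivity of $\ta\tb$ and Lemma~\ref{lem:commute} give $U = (\ta\tb)^s$ for some $s \geq 0$, and imposing $Only{\ta}s(U, y')$ equates $s$ with $j$, so $U = (\ta\tb)^j$. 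Applying $Only{\tb}s(U, U_\tb)$ yields the required $U_\tb = \tb^j$, and $Only{\tb}s(w, U_\tb)$ then forces $|w|_\tb = j$, i.e., $r = j$ and $w = x^j$.

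The remaining step is a routine comparison inside the $\ta$-alphabet. I let $w_\ta := \ta^{(i-1)j}$ via $Only{\ta}s(w, w_\ta)$ and $z' := \ta^{k}$ via $Only{\ta}s(z \cdot \ta,\, z')$, and add the single word equation $w_\ta \cdot y' = z'$, which reads $\ta^{ij} = \ta^k$ and holds iff $k = ij$. Collecting all of the above equations and predicate atoms gives the defining formula for $\Multiply(x,y,z)$; the degenerate cases where one of $i,j,k$ equals $0$ (so the corresponding encoding is $\varepsilon$) are caught by a small finite case distinction. Combined with the addition predicate, which is expressible from word equations alone on this encoding as noted before the theorem, $\Multiply$ is then the standard ingredient for reducing Hilbert's tenth problem, transferring its undecidability and closely following the argument of \cite{buchi1990}.
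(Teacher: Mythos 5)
Your construction is correct: commutation with the primitive words $x$ and $\ta\tb$ (via Lemma~\ref{lem:commute}) pins $w = x^j$, the counting atoms then reduce $k = ij$ to the trivial equation $\ta^{(i-1)j}\,\ta^{j} = \ta^{k}$, and the zero cases are indeed handled by a finite disjunction, so the whole formula is satisfiable iff $k=ij$. The paper gives no proof of this theorem---it is imported from \citet{buchi1990}---but your argument is essentially the classical one, and the same skeleton (forcing a witness $x''$ to satisfy $x''x = xx''$, hence $x'' = x^p$, and then pinning the exponent $p$ to $j$ before reading off the product from a letter count) is exactly the technique the paper itself uses in its proof of Proposition~\ref{prop:morphisms}.
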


\begin{corollary}[\citet{buchi1990}]
The existential theory of word equations with additional predicates $Only{\ta}s(x,y)$ and $Only{\tb}s(x,y)$ is undecidable.
\end{corollary}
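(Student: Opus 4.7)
The plan is to reduce Hilbert's tenth problem---equivalently, the satisfiability of Diophantine equations over $\mathbb{N}$, undecidable by Matiyasevich's theorem---to the existential theory of word equations augmented with the predicates $Only{\ta}s$ and $Only{\tb}s$. Since the preceding theorem of \citet{buchi1990} already provides a $\Multiply$ predicate expressible using $Only{\ta}s$ and $Only{\tb}s$, the remaining task is to verify that (i) the unary-style encoding of naturals used in that theorem can itself be enforced, and (ii) addition on encoded naturals can be modelled using only word equations.

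First, I would fix the unary encoding described in the excerpt, representing a natural number $i$ by a string of the form $\ta^{i-1}\tb$ (with $\varepsilon$ for $0$). The auxiliary predicate $\text{IsNumber}(x)$, which asserts that $x$ lies in $\ta^*\tb$, can be encoded by requiring $x = x'\tb$ for a fresh variable $x'$ and then constraining $x'$ to contain no occurrence of $\tb$; the latter is a standard positive word-equation constraint using a single further variable pair. The predicate $\text{Add}(x,y,z)$, asserting that $z$ encodes the sum of the numbers encoded by $x$ and $y$, reduces to an equation of the form $xy = z\tb$ (after stripping and reattaching the terminator), which is a plain positive word equation.

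Next, I would perform the standard decomposition of Diophantine polynomial equations. Given an input instance $p(n_1,\ldots,n_r) = q(n_1,\ldots,n_r)$ with integer coefficients, introduce fresh numeric variables and decompose the computation of both sides into atomic steps of the form $v = u_1 + u_2$ or $v = u_1 \cdot u_2$ and atomic equalities $v = v'$. Each additive atomic step becomes an application of $\text{Add}$ (hence a word equation), each multiplicative atomic step becomes an application of $\Multiply$ (available via the preceding theorem), equality becomes the word equation $v = v'$, and $\text{IsNumber}$ is conjoined for every variable. Existentially quantifying all variables yields a sentence in the extended theory that is satisfiable exactly when the original Diophantine instance has a solution in $\mathbb{N}^r$; undecidability then follows from Matiyasevich's theorem.

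The main obstacle is bookkeeping: making sure that the encoding is treated consistently across all atomic steps, in particular that the trailing $\tb$ of the unary representation is handled uniformly by the addition equations and that the $\Multiply$ predicate from Büchi and Senger is applied to the same encoding. Once this is verified, the reduction is immediate, and the corollary follows directly.
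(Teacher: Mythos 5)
Your proposal is correct and matches the argument the paper leaves implicit: the corollary follows from the preceding theorem (multiplication is expressible via $Only{\ta}s$ and $Only{\tb}s$) together with the observation that addition on the unary encoding $\ta^{i-1}\tb$ is a plain word equation, after which undecidability of Hilbert's tenth problem gives the result. The only nit is that your displayed addition equation $xy = z\tb$ is not literally correct for this encoding (one needs something like $x'\ta y = z$ where $x = x'\tb$ and $x'\ta = \ta x'$ enforces $x' \in \{\ta\}^*$ via the commutativity lemma), but you flag exactly this stripping-and-reattaching bookkeeping yourself, so the argument stands as the paper intends.
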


It is a straightforward observation that the predicates $Eq_\ta$ and $Eq_\tb$ which compare occurrences of a single letter are equivalent to $Only{\ta}s$ and $Only{\tb}s$ respectively in the sense that one can be used to model the other and vice versa.

\begin{proposition}\label{prop:countletters}
The predicate $Eq_\ta$ is equivalent to the predicate $Only{\ta}s$. Likewise, $Eq_\tb$ is equivalent to $Only{\tb}s$.
\end{proposition}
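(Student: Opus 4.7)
The plan is to prove both equivalences by showing how each predicate in a pair can be expressed in terms of the other together with positive word equations, after which the existential theory of word equations augmented with one predicate can simulate the other. I will focus on $Eq_\ta \Leftrightarrow Only{\ta}s$; the case for $\tb$ is fully symmetric.

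For the easier direction, $Only{\ta}s \Rightarrow Eq_\ta$, I will express $Eq_\ta(x,y)$ by the formula
\[\exists z.\; Only{\ta}s(x,z) \land Only{\ta}s(y,z).\]
If $|x|_\ta = |y|_\ta = k$, then $z = \ta^k$ witnesses the existential; conversely, if both $Only{\ta}s$ atoms hold for the same $z$, then $z = \ta^{|x|_\ta} = \ta^{|y|_\ta}$, forcing $|x|_\ta = |y|_\ta$.

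For the other direction, $Eq_\ta \Rightarrow Only{\ta}s$, I need a way to say ``$y$ consists only of $\ta$'s'' using purely positive word equations, so that $Only{\ta}s(x,y)$ can then be expressed as
\[y\ta = \ta y \;\land\; Eq_\ta(x,y).\]
The heart of the argument is showing that the single equation $y\ta = \ta y$ is equivalent to $y \in \ta^*$. This follows from the Commutativity Equation (Lemma~\ref{lem:commute}): if $y\ta = \ta y$, there exist $w \in A^*$ and $p,q \in \mathbb{N}_0$ with $y = w^p$ and $\ta = w^q$; since $\ta$ is a single letter, either $w = \varepsilon$ (forcing $\ta = \varepsilon$, impossible) or $w = \ta$ and $q=1$, so $y = \ta^p$. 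The converse is obvious. Once $y \in \ta^*$ is enforced, the $Eq_\ta(x,y)$ atom pins down the exponent to be $|x|_\ta$.

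Finally, I will note that the $\tb$-case is handled identically by replacing $\ta$ with $\tb$ throughout: the equation $y\tb = \tb y$ forces $y \in \tb^*$, and an analogous existential witness establishes the direction from $Only{\tb}s$ to $Eq_\tb$. No step looks delicate here; the only point requiring justification is the reduction of $y \in \ta^*$ to the commutativity equation $y\ta = \ta y$, and Lemma~\ref{lem:commute} already does that work.
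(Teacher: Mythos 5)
Your proposal is correct and uses exactly the same two constructions as the paper: $Eq_\ta(x,y) := \exists z.\, Only{\ta}s(x,z) \land Only{\ta}s(y,z)$ in one direction and $Only{\ta}s(x,y) := y\ta = \ta y \land Eq_\ta(x,y)$ in the other. The only difference is that you spell out the appeal to Lemma~\ref{lem:commute} to justify that $y\ta = \ta y$ forces $y \in \{\ta\}^*$, which the paper leaves implicit; this is a harmless (indeed welcome) addition, not a deviation.
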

\begin{proof}
Given $Eq_\ta$, we can construct $Only{\ta}s$ as follows:
\begin{align*}
Only{\ta}s(x,y) := \;& y \ta = \ta y \land Eq_\ta(x,y).
\end{align*}
Given $Only{\ta}s$ we can construct $Eq_\ta$ as follows:
\begin{align*}
Eq_\ta(x,y) := \;& \exists z.\; Only{\ta}s(x,z) \land Only{\ta}s(y,z).
\end{align*}
The equivalence of $Only{\tb}s$ and $Eq_\tb$ can be shown in the same way.
\end{proof}

As a consequence, we have the following:

\begin{corollary}
The existential theory of word equations is undecidable when augmented with both the predicates $Eq_\ta, Eq_\tb$.
\end{corollary}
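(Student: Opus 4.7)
The plan is to reduce the known undecidability of the existential theory of word equations augmented with $Only{\ta}s$ and $Only{\tb}s$ (the immediately preceding corollary of B\"uchi and Senger) to the theory augmented instead with $Eq_\ta$ and $Eq_\tb$. Proposition~\ref{prop:countletters} has already done essentially all of the combinatorial work: it exhibits, for each pair, a definition of one predicate in terms of the other using only additional word equations and (in one direction) a single extra existentially quantified variable. So the corollary should follow by a syntactic translation of formulas.

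Concretely, I would take an arbitrary existential sentence $\phi$ over word equations with occurrences of $Only{\ta}s(\cdot,\cdot)$ and $Only{\tb}s(\cdot,\cdot)$, and produce a sentence $\phi'$ in the target theory by replacing, in place, every atom $Only{\ta}s(u,v)$ by the conjunction $v\ta = \ta v \,\wedge\, Eq_\ta(u,v)$, and every atom $Only{\tb}s(u,v)$ by $v\tb = \tb v \,\wedge\, Eq_\tb(u,v)$. Here $u$ and $v$ are the original (possibly variable, possibly constant) argument terms, so no fresh variables are required and the existential quantifier prefix of $\phi$ is preserved. By Proposition~\ref{prop:countletters} and Lemma~\ref{lem:commute} (which guarantees that $v\ta=\ta v$ forces $v\in\ta^*$, and symmetrically for $\tb$), each replaced atom has exactly the same set of satisfying assignments as the original, so $\phi$ and $\phi'$ are equisatisfiable in $A^*$.

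Since the B\"uchi--Senger corollary asserts that satisfiability of sentences of the form $\phi$ is undecidable, and the translation $\phi \mapsto \phi'$ is a computable, satisfiability-preserving reduction producing a sentence in the existential theory of word equations augmented with $Eq_\ta$ and $Eq_\tb$, the latter theory is also undecidable. There is no real obstacle here; the only thing to verify is that the translation is uniform and preserves satisfiability atom-by-atom, which is immediate from the explicit formulas given in Proposition~\ref{prop:countletters}.
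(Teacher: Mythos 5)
Your proof is correct and follows essentially the same route as the paper, which obtains the corollary directly as a consequence of Proposition~\ref{prop:countletters} together with the B\"uchi--Senger undecidability result for the theory with $Only{\ta}s$ and $Only{\tb}s$. Your atom-by-atom substitution $Only{\ta}s(u,v) \mapsto (v\ta = \ta v \land Eq_\ta(u,v))$, using the direction of the equivalence that needs no fresh variables, simply makes explicit the reduction that the paper leaves implicit in its one-line derivation.
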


\citet{buchi1990} also showed that if only one of the predicates $Only{\ta}s$, $Only{\tb}s$ is allowed, but in addition also a  predicate $\Length(x,y)$ which evaluates to true if and only if $|x| = |y|$, then the theory also remains undecidable. Thus, the same holds when considering $Eq_\ta$ (or $Eq_\tb)$.

\begin{corollary}
The existential theory of word equations is undecidable when augmented with both the predicates $Eq_\ta$ and $\Length$.
\end{corollary}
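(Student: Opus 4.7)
My plan is to reduce from the Büchi–Senger result cited immediately before the corollary, which states that the existential theory of word equations augmented with $Only{\ta}s$ (or $Only{\tb}s$) together with $\Length$ is undecidable. Since $\Length$ is already present on both sides, the only thing I need to do is to simulate $Only{\ta}s$ inside the theory with $Eq_\ta$ and $\Length$.

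For this, I would simply re-use the construction from Proposition~\ref{prop:countletters}, where $Only{\ta}s(x,y)$ is expressed as $y\ta = \ta y \land Eq_\ta(x,y)$. This formula uses only a plain word equation (the commutation $y\ta = \ta y$, which forces $y \in \ta^*$ by Lemma~\ref{lem:commute}) together with the predicate $Eq_\ta$, and in particular does not rely on $Only{\tb}s$ or any other additional symbol. Hence every occurrence of $Only{\ta}s(x,y)$ in a formula from the Büchi–Senger theory can be replaced in place by this equivalent expression, yielding an equisatisfiable formula in the existential theory of word equations with $Eq_\ta$ and $\Length$.

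Putting these two pieces together gives a straightforward many-one reduction: a formula $\varphi$ in the theory with word equations, $Only{\ta}s$, and $\Length$ is satisfiable if and only if the formula $\varphi'$ obtained from $\varphi$ by rewriting each atom $Only{\ta}s(x,y)$ as $y\ta = \ta y \land Eq_\ta(x,y)$ is satisfiable in the theory with word equations, $Eq_\ta$, and $\Length$. Undecidability of the target theory then follows immediately from the Büchi–Senger result.

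I do not foresee any real obstacle here: the whole point of Proposition~\ref{prop:countletters} is to isolate the simulation step, and $\Length$ is simply carried along unchanged through the translation. The only thing worth double-checking is that the substitution is syntactically uniform and preserves the existential quantifier structure, but since $Only{\ta}s$ is being replaced by a quantifier-free formula in the same free variables, this is automatic.
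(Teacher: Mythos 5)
Your proof is correct and takes essentially the same route as the paper: the corollary there follows directly from the B\"uchi--Senger undecidability result for word equations with $Only{\ta}s$ and $\Length$, combined with the simulation $Only{\ta}s(x,y) := y\ta = \ta y \land Eq_\ta(x,y)$ from Proposition~\ref{prop:countletters}. Your observation that the replacement is quantifier-free in the same free variables, so that the existential structure and the $\Length$ atoms carry over unchanged, is exactly the (implicit) justification the paper relies on.
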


It is worth noting that the case that only $Only{\ta}s$ (or equivalently any one of $Eq_\ta$, $Eq_\tb$ or $Only{\tb}s$) is given (i.e., without $\Length$), it remains unknown whether the theory is decidable. Next, we show that each of the other predicates can be used to obtain the predicates $Only{\ta}s$ and $Only{\tb}s$. For $\Subword$, undecidability was also shown by Haflon et al.~\cite{halfon2017}. 

\begin{proposition}\label{prop:abelianetc}
Given any of the predicates $\Abelian$, $\Shuffle$, $\Projection$, $\Subword$, $\Insert$, $\Erase$, it is possible to construct the predicates $Only{\ta}s$ and $Only{\tb}s$.
\end{proposition}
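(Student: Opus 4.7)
My plan is to handle each of the six predicates separately by exhibiting, in each case, an explicit existential formula over word equations and the predicate in question which defines $Only{\ta}s(x,y)$; the analogous formula for $Only{\tb}s(x,y)$ is obtained by interchanging the roles of $\ta$ and $\tb$ throughout. A common first step in every case is to force $y \in \ta^*$ via the word equation $y\ta = \ta y$ (Lemma~\ref{lem:commute}); the creative part is then to use the predicate, together with suitably constrained auxiliary variables, to pin down the equality $|y| = |x|_\ta$. I will carry this out for $A = \{\ta,\tb\}$, which is all that is needed for the subsequent multiplication encoding; for larger $A$ the same templates work after adding one commutation-constrained auxiliary variable per additional letter.

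The predicates $\Insert$ and $\Erase$ admit the most direct constructions. For $\Insert$, I would define $Only{\ta}s(x,y) := y\ta = \ta y \,\land\, \Insert(y,\tb,x)$: once $y \in \ta^*$ is forced and $x$ is obtained from $y$ by inserting $\tb$'s, then $|x|_\ta = |y|_\ta = |y|$, so $y = \ta^{|x|_\ta}$. For $\Erase$, I would take $Only{\ta}s(x,y) := y\ta = \ta y \,\land\, \Erase(x,\tb,y)$, using that any removal of some $\tb$'s from $x$ that lands in $\ta^*$ must in fact remove all of them. For $\Shuffle$ and $\Abelian$ I introduce an auxiliary variable $w$ with $w\tb = \tb w$; the formulas $\exists w.\, y\ta = \ta y \,\land\, w\tb = \tb w \,\land\, \Shuffle(y,w,x)$ and $\exists w.\, y\ta = \ta y \,\land\, w\tb = \tb w \,\land\, \Abelian(yw,x)$ each force both $|y| = |x|_\ta$ and $|w| = |x|_\tb$, so $y = \ta^{|x|_\ta}$ as required.

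The predicate $\Projection$ is subtler because $\Projection(x,y)$ always admits the trivial witness $\pi \equiv \varepsilon$, so a naive formula fails to pin $y$ down. The trick I would use is to append a $\ta$ on both sides and set $Only{\ta}s(x,y) := y\ta = \ta y \,\land\, \Projection(x\ta,\,y\ta)$. Any witnessing projection $\pi$ with $\pi(x\ta) = y\ta$ must satisfy $\pi(\ta) = \ta$: were it the case that $\pi(\ta) = \varepsilon$, the word $\pi(x\ta)$ would contain no $\ta$, contradicting that $y\ta$ ends in $\ta$. With $\pi(\ta) = \ta$ fixed, either $\pi(\tb) = \tb$ (forcing $x \in \ta^*$ since $y \in \ta^*$) or $\pi(\tb) = \varepsilon$; in both subcases $\pi(x) = \ta^{|x|_\ta}$, giving $y = \ta^{|x|_\ta}$.

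The case I expect to be the main obstacle is $\Subword$, since $\Subword(y,x)$ with $y \in \ta^*$ only delivers the one-sided bound $|y| \leq |x|_\ta$ and $\Subword$ has no obvious existentially-quantifiable dual supplying a matching lower bound. My plan here is to combine several $\ta^*$- and $\tb^*$-auxiliary variables with a carefully chained collection of subword assertions designed to build witnesses for the sorted form $\ta^{|x|_\ta}\tb^{|x|_\tb}$ of $x$ (from which $y$ can be read off as the $\ta^*$-prefix), forcing the candidate $y$ to be maximal among $\ta^*$-subwords of $x$. If a purely ad-hoc combinatorial construction along these lines proves elusive, an alternative route is to lift the encoding implicit in the Halfon et al.~\cite{halfon2017} undecidability proof for the existential theory enriched with $\Subword$, which already internally expresses letter-count equalities.
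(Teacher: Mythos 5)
Your $\Subword$ case is a genuine gap, and it is precisely the case where the paper's proof contains an idea you explicitly talked yourself out of. You correctly observe that a positive atom $\Subword(y,x)$ with $y \in \{\ta\}^*$ yields only the upper bound $|y| \leq |x|_\ta$, but you then look for an ``existentially-quantifiable dual'' and, not finding one, fall back on a sketch (building the sorted form $\ta^{|x|_\ta}\tb^{|x|_\tb}$ via chained positive subword assertions) or on lifting the encoding of \cite{halfon2017}. The sketch stalls for a concrete reason: to force the matching lower bound $|x|_\ta \leq |y|$ with positive atoms you would need $x$ to embed into a word containing exactly $|y|$ occurrences of $\ta$, such as $(\tb^N\ta)^{|y|}\tb^N$, and constructing such a word from $y$ using concatenation alone already requires the counting predicate you are trying to define. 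The missing observation is much simpler: the quantifier-free matrix of the existential theory admits \emph{negated} atoms --- exactly as the paper uses negative word equations $U \neq V$ elsewhere --- so the new predicate may be used negatively. The paper's Case~4 is the one-liner $Only{\ta}s(x,y) := y\ta = \ta y \,\land\, \Subword(y,x) \,\land\, \lnot \Subword(y\ta, x)$, where the negated atom expresses that $y$ is a \emph{maximal} $\ta$-power subword of $x$, pinning $|y| = |x|_\ta$. Note also that deferring to \cite{halfon2017} would at best reprove undecidability of the $\Subword$ extension by another route; it does not construct $Only{\ta}s$ and $Only{\tb}s$, which is what this proposition asserts and what the surrounding reductions consume.

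The other five cases are correct and essentially the paper's constructions: the paper works over arbitrary $A$ by introducing one commutation-constrained auxiliary variable per extra letter (for $\Abelian$ and $\Shuffle$) or by chaining one $\Erase$ application per extra letter, and handles $\Insert$ via the equivalence $\Insert(x,y,z) \Leftrightarrow \Erase(z,y,x)$, just as you anticipate; the paper targets $Eq_\ta$ rather than $Only{\ta}s$ in the $\Abelian$ case, an immaterial difference given Proposition~\ref{prop:countletters}. Your $\Projection$ case is in fact tighter than the paper's: the paper claims $y\ta = \ta y \land \Projection(x,y)$ forces $y = \ta^{|x|_\ta}$, overlooking the projection that erases \emph{every} letter, which makes $y = \varepsilon$ a spurious witness for any $x$; your padded version $\Projection(x\ta, y\ta)$ forces the witnessing projection to fix $\ta$ and repairs this edge case.
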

\begin{proof}
W.l.o.g.\,suppose $A = \{\ta_1,\ta_2,\ldots, \ta_n\}$ where $\ta_1 = \ta$ and $\ta_2 = \tb$. For each predicate, we shall give a construction for either $Only{\ta}s$ or $Eq_\ta$. In each case $Only{\tb}s$ or $Eq_\tb$ can be constructed in the same way mutatis mutandis.\\
\emph{Case 1. ($\Abelian$) }Suppose we have the predicate $\Abelian$. Then we can construct $Eq_\ta$ as follows:
\begin{align*}
 & Eq_{\ta}(x,y) :=\; \exists x^\prime,y^\prime, z_2,\ldots z_n, z^\prime_2, \ldots, z^\prime_n. \; z_2 \ta_2 = \ta_2 z_2 \;\land\\
 &z_3 \ta_3 = \ta_3 z_3 \land \ldots \land z_n \ta_n = \ta_n z_n \land \\ 
 & z_2^\prime \ta_2 = \ta_2 z^\prime_2  \land z^\prime_3 \ta_3 = \ta_3 z^\prime_3 \land \cdots \land z^\prime_n \ta_n = \ta_n z^\prime_n \land \; \\
 &x^\prime = xz_1z_2\cdots z_n \land y^\prime = y z_2^\prime z_3^\prime \cdots z_n^\prime \; \land \Abelian(x^\prime, y^\prime).
\end{align*}
By Lemma~\ref{lem:commute}, the first three lines are satisfied if and only if for $2 \leq i \leq n$, $z_i,z^\prime_i \in \{\ta_i\}^*$. It follows directly that there exist choices of $z_i,z_i^\prime$ such that $x^\prime$ and $y^\prime$ are abelian equivalent if and only if $|x|_\ta = |y|_\ta$.\\
\emph{Case 2. ($\Shuffle$) }Suppose we have the predicate $\Shuffle$. We construct the predicate $Only{\ta}s$ as follows:
\begin{align*}
&Only{\ta}s(x,y) := \;  \exists y_2, y_3,\ldots y_{n-1}, z_2,z_3, \ldots z_{n}.\; y \ta_1 = \ta_1 y\; \land \\
&z_2 \ta_2 = \ta_2 z_2 \land \ldots \land z_{n}\ta_{n} = \ta_{n} z_{n} \;\land 
\Shuffle(y,z_2,y_2) \land\\
&\Shuffle(y_2,z_3,y_3) \land \ldots  \land \Shuffle(y_{n-1}, z_n, x).
\end{align*} 
To verify the correctness, note that by Lemma~\ref{lem:commute}, the first and second lines are satisfied if and only if $z_i \in \{\ta_i\}^*$ for $2 \leq i \leq n$ and $y \in \{\ta_1\}^*$. The third line is also satisfied if, in addition, $y_i$ is obtained by shuffling $y_{i-1}$ with $z_i$. The net effect of this is that $y_i$ is obtained from $y_{i-1}$ by adding occurrences of $\ta_{i}$. It is straightforward that if $y = \ta_1^{|x|_{\ta_1}}$, then there exist choices of $z_2,z_3,\ldots,z_n$ and $y_2,\ldots y_{n-1}$ such that the sentence is true. Similarly, since the shuffles are only able to introduce the letters $\ta_i$ for $i \geq 2$, if $y \not=\ta_1^{|x|_{\ta_1}}$ then the sentence cannot be satisfied.\\
\emph{Case 3. ($\Projection$)} Suppose we have the predicate $\Projection$. We can construct the predicate $Only{\ta}s$ as follows:
\begin{align*}
Only{\ta}s(x,y) := \; & y \ta = \ta y \land  \Projection(x,y).
 \end{align*}
To verify the construction, suppose the sentence evaluates to true. Then by Lemma~\ref{lem:commute}, since $y$ satisfies $y\ta = \ta y$, it follows that $y \in \{\ta\}^*$. Moreover, since $y$ satisfies $\Projection(y,z)$, we must necessarily have $y = \ta^{|x|_\ta}$. It is straightforward to see that in the other direction, if $y=\ta^{|x|_\ta}$ (i.e. $Only{\ta}s(x,y)$ is true), then the sentence is satisfied.\\
\emph{Case 4. ($\Subword$)} Suppose we have the predicate $\Subword(x,y)$. Then we can construct the predicate $Eq_\ta$ as follows:
\begin{align*}
Only{\ta}s(x,y) :=\; & \exists z. y\ta = \ta y \land \Subword(y,x) \; \land\\ 
& z = y\ta \land \lnot \Subword(z,x). 
\end{align*}
To verify the correctness, it is sufficient to notice firstly that by Lemma~\ref{lem:commute}, $y\ta = \ta y$ if and only if $y \in \{\ta\}^*$, and secondly that this implies that $\Subword(z,x) \land \lnot \Subword(z\ta,x)$ also holds if and only if $y = \ta^{|x|_\ta}$.\\
\emph{Case 5 ($\Erase$)} Suppose we have the predicate $\Erase$. We construct the predicate $Only{\ta}s$ as follows:
\begin{align*}
Only{\ta}s(x,y) := \; & \exists z_2,\ldots, z_{n-1}.\; y \ta_1 = \ta_1 y \; \land\\
&\Erase(x, \ta_{n}, z_{n-1}) \land \Erase(z_{n-1}, \ta_{n-1}, z_{n-2}) \\ &\land \ldots \land \Erase(z_2, \ta_2, y).
\end{align*}
To verify the correctness, notice that for the sentence to be satisfied, $z_{n-1}$ must be obtained by erasing only $\ta_n$s from $x$, and in general $z_i$ must be obtained from $z_{i+1}$ by removing only $\ta_{i+1}$s. Moreover, by Lemma~\ref{lem:commute}, $y$ must consist only of $\ta_1$s, and must be obtained by removing only $\ta_2$s from $z_2$. The net effect of this is that $y$ must be the product of removing all occurrences of $\ta_2,\ta_3,\ldots,\ta_n$ from $x$ (i.e., $y = \ta_1^{|x|_{\ta_1}}$). Conversely, it is straightforward to see that if this holds, the sentence is satisfied. 
\\
\emph{Case 6. ($\Insert$)} Follows directly from the fact that $\Insert(x,y,z)$ is true if and only if $\Erase(z,y,x)$ is true, along with the result from Case~5.
\end{proof}

We discuss the predicate $\Morphism$ separately, and rather than reducing to $Eq_\ta$ and $Eq_\tb$, we construct the $\Multiply$ predicate directly.

\begin{proposition}\label{prop:morphisms}
Let $|A| \geq 3$. Then given the predicate $\Morphism$, it is possible to construct the predicate $\Multiply$.
\end{proposition}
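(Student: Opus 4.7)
The plan is to express $\Multiply(x,y,z)$ as a single call to $\Morphism$ combined with a handful of word equations. First I would use the commutation trick of Lemma~\ref{lem:commute} to extract $v = \ta^j$ from $y = \ta^{j-1}\tb$: introducing auxiliary variables $u,v$ and imposing $u\ta = \ta u$, $u\tb = y$, $v = u\ta$ forces $u = \ta^{j-1}$ and hence $v = \ta^j$. Using the third terminal $\tc \in A\setminus\{\ta,\tb\}$ (available because $|A|\geq 3$), I would then propose
\[
\Multiply(x,y,z) := \exists u,v.\; u\ta = \ta u \,\land\, u\tb = y \,\land\, v = u\ta \,\land\, \Morphism(\ta\ta\tc\tc\tb\,x,\; vv\tc\tc\,yz),
\]
and show that this formula is satisfiable iff $k = ij$. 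The ``$\Leftarrow$'' direction is immediate: when $k = ij$, the morphism defined by $h(\ta) = v$, $h(\tb) = y$, $h(\tc) = \tc$ sends $\ta\ta\tc\tc\tb\,x$ to $vv\tc\tc\,y\cdot v^{i-1}y$, and $v^{i-1}y = \ta^{j(i-1)+j-1}\tb = \ta^{ij-1}\tb = z$.

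For the ``$\Rightarrow$'' direction my plan is to analyse any witnessing morphism $h$ by counting $\tc$-occurrences on both sides. Writing $p = |h(\ta)|_\tc$, $q = |h(\tb)|_\tc$, $r = |h(\tc)|_\tc$, equality forces the Diophantine constraint $(i+1)p + 2q + 2r = 2$, since $v,y,z$ are $\tc$-free. The principal case is $r = 1, p = q = 0$: the two $\tc$'s of the right-hand side are consecutive while on the left they occur inside $h(\tc)^2$. Writing $h(\tc) = \alpha\tc\beta$ with $\alpha,\beta \in \{\ta,\tb\}^*$, the two $\tc$'s of $h(\tc)^2$ are separated by $|\alpha|+|\beta|+1$ positions, so consecutivity forces $\alpha = \beta = \varepsilon$ and $h(\tc) = \tc$. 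Matching the $2j$-letter prefix then yields $h(\ta)^2 = v^2$, hence $h(\ta) = v$ by uniqueness of square roots in a free monoid. The residual equation $h(\tb)\,v^{i-1}\,h(\tb) = yz$, together with the fact that $yz$ contains exactly two occurrences of $\tb$ (while $v \in \ta^*$), forces $h(\tb) = \ta^a\tb\ta^b$; lining this up against $\ta^{j-1}\tb\ta^{k-1}\tb$ gives $b = 0$, $a = j - 1$, so $h(\tb) = y$ and $k = ij$.

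The main obstacle will be eliminating the residual case $r = 0$, where $(i+1)p + 2q = 2$ admits only $(p,q) \in \{(1,0),(0,1)\}$ (the first only when $i = 1$). My plan is to argue that in each sub-case the requirement that the two $\tc$'s occupy consecutive positions of the right-hand side (now originating from distinct copies of $h(\ta)$ or $h(\tb)$, since $h(\tc)$ is $\tc$-free) collapses the letter that carries the $\tc$ down to the single letter $\tc$ and confines the remaining images to $\{\ta,\tb\}^*$. The resulting left-hand side then either begins with $\tc\tc$ (incompatible with the $\ta^{2j}$ prefix of the right-hand side whenever $j \geq 1$) or ends with $\tc\tc$ (incompatible with $yz$ ending in $\tb$). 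Because the total $\tc$-count on the right-hand side is only $2$, the number of sub-cases is very small and each is dispatched by one of these two structural mismatches.
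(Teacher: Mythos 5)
Your construction is correct, and it takes a genuinely different route from the paper's. The paper first passes to an auxiliary predicate $\Multiply_2$ (valid only under the side condition $ij \geq 2$, recovered via the shift $\Multiply(x,y,z)=\Multiply_2(\ta x,\ta y,\ta z)$), uses an $init$ predicate built from commutation equations to introduce, besides $\ta^i,\ta^j,\ta^k$, an extra unknown $x''=(\ta^i\tb)^p$, pins down $p=j$ with two mutual calls $\Morphism(x'',y')\land\Morphism(y',x'')$ via a divisibility argument, and only then uses a third call $\Morphism(u,v)$ with $u=x''\tc\tc x''x'\tc\tc\tb$ and $v=z'\tc\tc z'x'\tc\tc$ (four $\tc$-anchors), extracting $k=ij$ from the pair of linear conditions $nij=k$ and $nij+ni=k+i$. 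You instead use a \emph{single} $\Morphism$ call whose pattern side contains the input $x$ itself: once your $\tc$-count equation $(i+1)p+2q+2r=2$ together with adjacency of the two $\tc$'s forces $h(\tc)=\tc$, $h(\ta)=\ta^j$ and $h(\tb)=y$, the image of $x=\ta^{i-1}\tb$ is $\ta^{ij-1}\tb$ and must equal $z$, so the witnessing morphism literally computes the product. Your two residual subcases with $r=0$ are correctly dispatched: for $q=1$ adjacency forces $h(\tb)=\tc$ and $h(\ta)^{i-1}=\varepsilon$, so the image ends in $\tc$ while the right-hand side ends in $\tb$; for $p=1$ (possible only when $i=1$) the image begins with $\tc\tc$ while the right-hand side begins with $\ta$, using $j\geq 1$, which your equations $u\ta=\ta u\land u\tb=y\land v=u\ta$ do guarantee. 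What your approach buys: one morphism invocation instead of three, no auxiliary product-encoding variable $x''$, no divisibility step, fewer $\tc$-anchors, and no $ij\geq 2$ restriction -- your formula is already correct at $i=j=1$, a point where the paper's $\Multiply_2$ reduction needs separate (easy) patching. Like the paper, you rely only on Lemma~\ref{lem:commute} and $\tc$-counting, so $|A|\geq 3$ is used exactly as in the statement; the one shared and harmless caveat is that correctness is argued assuming $x$ and $z$ are valid encodings $\ta^{i-1}\tb$ and $\ta^{k-1}\tb$, which is precisely how $\Multiply$ is specified in the paper.
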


\begin{proof}
Assume that $A$ contains at least three distinct letters: $\ta, \tb, \tc$. We shall construct a predicate $\Multiply_2(x,y,z)$ which returns true if $x = \ta^i \tb$, $y = \ta^j\tb$, $z = \ta^{ij}\tb$ and $ij \geq 2$. 
Note we can obtain $\Multiply$ from this, as $\Multiply(x, y, z) = \Multiply_2(ax, ay, az)$ for $x, y, z \neq \varepsilon$. For ease of exposition, we define first a predicate checking some `initial conditions':
\begin{align*}
& init(x,x',x'',y,y',z,z') := \exists w,w',w''.\; x'\! =\! w\ta \land  y'\!=\! w' \ta \land \\
& (x' = w'' \ta\ta \lor y' = w'' \ta\ta)\; \land
x' \ta = \ta x' \land
y' \ta = \ta y' \land \\
& z' \ta = \ta z' \land x = x' \tb \land y = y' \tb \land z = z \tb \; \land x'' x = x x''
\end{align*}
Recalling Lemma~\ref{lem:commute}, it is straightforward to see that $init$ evaluates to true if and only if there exist  $i,j,k,\ell, p \in \mathbb{N}_0$ with $ij \geq 2$ such that:
\begin{enumerate}
\item{} $x' = \ta^i$, $y' =\ta^j$, $z' = \ta^k$, and
\item{} $x = \ta^i\tb$, $y = \ta^j\tb$ $z = \ta^k\tb$, and
\item{} $x'' = (\ta^i\tb)^p$.
\end{enumerate}
Now we give the predicate for $\Multiply_2$ as follows:
\begin{align*} & \Multiply_2(x,y,z) := \; \exists x',x'',y',z',u,v.\;\\& init(x,x',x'',y,y',z,z') \; \land\\
& \Morphism(x'',y') \land \Morphism(y',x'') \; \land \Morphism(u,v) \land\\
& u = x'' \tc\tc x'' x' \tc\tc \tb \; \land v = z' \tc\tc z'x' \tc\tc.
\end{align*}
Suppose that Conditions~(1)-(3) are met (i.e., $init$ is satisfied). Consider the subclause 
$\Morphism(x'',y') \land \Morphism(y', x'').$ This is satisfied if and only if there exist morphisms $g,h :A^* \to A^*$ such that $g((\ta^i\tb)^p) = \ta^j$ and $h(\ta^j) = (\ta^i \tb)^p$. Clearly, the latter implies that $p$ is a multiple of $j$, while the former implies that $j$ is a multiple of $p$, and hence if both are satisfied then $j = p$. On the other hand, if $j=p$, then it is easy to construct such morphisms ($g$ maps $\tb$ to $\ta$ and $\ta$ to $\varepsilon$ while $h$ maps $\ta$ to $\ta^i\tb$). Thus this subclause is satisfied in addition to the $init$ predicate if and only if Conditions~(1)-(3) hold for $p = j$. By elementary substitutions, the last line is also satisfied if and only if
$ u = (\ta^i\tb)^j \tc\tc (\ta^i\tb)^j \ta^i \tc\tc \tb
\mbox{, and }v = (\ta^k \tc \tc \ta^{k+i} \tc\tc).$ 
It remains to show that there exists a morphism $f : A^* \to A^*$ such that $f(u) = v$ if and only if $k = ij$. In the case that $k = ij$, the morphism $f$ may be given e.g.\,by $f(\ta) = \ta$, $f(\tb) = \varepsilon$ and $f(\tc) = \tc$. For the other direction, assume that such a morphism $f$ exists. Firstly, consider the case that $f(\tc) \in \{\ta,\tb\}^*$. Then $\tc$ must occur in $f(\ta)$ or $f(\tb)$. However, under our assumption that $ij \geq 2$, this implies $|f(u)|_\tc > 4$ meaning $f(u) \not= v$ which is a contradiction. Consequently, we may infer that $f(\tc)$ contains the letter $\tc$. Then since $|u|_\tc = |v|_\tc$, it follows that $f(\tc) = v_1 \tc v_2$ where $v_1,v_2 \in \{\ta,\tb\}^*$. Thus $f(u) = f(\ta^i\tb)^j v_1 \tc v_2 v_1 \tc v_2 f(\ta^i\tb)^j \ta^i v_1 \tc v_2 v_1 \tc v_2 f(\tb)$. It follows that $v_1 = v_2 = \varepsilon$, and thus that $f(\tb) = \varepsilon$. Hence we must have that $f(\ta^{ij}) = \ta^k$ and $f(\ta^{ij+i}) = \ta^{k+i}$. Clearly, $f(\ta) = \ta^n$ for some $n \in \mathbb{N}$. Thus we have $nij = k$ and $nij + ni = k + i$. Hence, $n = 1$ and $k = ij$, as required.
\end{proof}

Summarising the consequences of Propositions~\ref{prop:countletters},~\ref{prop:abelianetc} and~\ref{prop:morphisms}, we have the following theorem.

\begin{theorem}
The existential theory of word equations becomes undecidable when augmented with any of the following predicates: $\Abelian$, $\Shuffle$, $\Projection$, $\Subword$, $\Morphism$ (if $|A|\geq 3$), $\Insert$, $\Erase$.
\end{theorem}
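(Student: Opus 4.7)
The plan is to derive this theorem essentially as a corollary that bundles together the constructions developed in Propositions~\ref{prop:countletters}, \ref{prop:abelianetc}, and~\ref{prop:morphisms}, together with the Büchi--Senger undecidability machinery already recalled above. The overall strategy is the standard one: for each listed predicate $R$, exhibit a way to encode multiplication of natural numbers (in the unary-style encoding $n \mapsto \ta^{n-1}\tb$) using word equations together with $R$; since addition in this encoding is trivially expressible by word equations alone, a reduction from Hilbert's tenth problem (or directly from $T_P$ via Theorem~\ref{thm:TPUndecidable}) then yields undecidability.

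First I would handle the six predicates $\Abelian$, $\Shuffle$, $\Projection$, $\Subword$, $\Insert$, $\Erase$ uniformly. By Proposition~\ref{prop:abelianetc}, each of these predicates, in conjunction with word equations, can be used to define $Only{\ta}s(x,y)$ and (by the symmetric construction, swapping the roles of $\ta$ and $\tb$) also $Only{\tb}s(x,y)$. By the Büchi--Senger result stated just before Proposition~\ref{prop:countletters}, having both $Only{\ta}s$ and $Only{\tb}s$ available on top of the existential theory of word equations suffices to model the multiplication predicate, and hence to simulate arbitrary Diophantine problems. Undecidability follows. For $\Morphism$, the reduction is even more direct: Proposition~\ref{prop:morphisms} already constructs $\Multiply$ explicitly (under the mild hypothesis $|A|\geq 3$ used there), so the same Hilbert-tenth-problem-style reduction applies verbatim.

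It remains only to check that nothing in the reductions requires anything beyond the existential fragment: each of the formulas defining $Only{\ta}s$, $Only{\tb}s$, or $\Multiply$ in the cited propositions is an existentially quantified conjunction of word equations and instances of the predicate $R$, so the resulting undecidable problem lies in the existential theory augmented with $R$ as required.

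The ``hard part'' is not in this final theorem itself but in the preceding propositions; here the only thing to be careful about is recording the alphabet assumptions (the constructions in Proposition~\ref{prop:abelianetc} assume the alphabet $A=\{\ta_1,\ldots,\ta_n\}$ over which the equations are interpreted is the same alphabet used in defining the predicate, and the $\Morphism$ case additionally requires $|A|\geq 3$ to accommodate the auxiliary letter $\tc$ used in the encoding). Once these hypotheses are noted, the theorem is an immediate consequence of the three preceding propositions and the Büchi--Senger theorem.
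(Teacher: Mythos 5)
Your proposal matches the paper's proof, which consists of exactly the same bundling: the theorem is stated there as a direct summary of Propositions~\ref{prop:countletters}, \ref{prop:abelianetc} and~\ref{prop:morphisms} combined with the B\"uchi--Senger result that $Only{\ta}s$ and $Only{\tb}s$ (or a direct $\Multiply$ construction, in the $\Morphism$ case) suffice to model multiplication and hence yield undecidability. The only minor inaccuracy is your claim that every defining formula is a conjunction of word equations and \emph{positive} instances of the predicate --- the paper's $\Subword$ construction uses a negated occurrence $\lnot \Subword(z,x)$ --- but since the paper itself relies on that same literal, this does not separate your argument from theirs.
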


\section{Decidability with Restricted Form}
\label{sec:dec}

We shall now concentrate on decidable variants. In particular, we shall consider extensions to the theory of word equations over $A^*$ in conjunction with restrictions to the structure of allowed equations. Firstly, we note that if we allow at most one terminal symbol appearing in the equations (this is a weaker restriction than enforcing $|A|=1$), then the existential theory remains decidable when augmented with linear arithmetic over the lengths of variables. 

\begin{theorem}\label{the:oneletter}
Let $\ta \in A$. The satisfiability of quantifier-free positive formulae over word equations $U = V$, such that $U, V \in (X\cup\{\ta\})^*$, with linear length constraints is NP-complete.
\end{theorem}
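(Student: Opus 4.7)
The plan is to convert the problem into the satisfiability of a positive Boolean combination of linear (in)equalities over $\mathbb{N}_0$ (equivalently, an instance of existential Presburger arithmetic) and then invoke its known NP-completeness. The single-terminal restriction on the equations is what makes this conversion lossless.

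The key step, which I would address first, is a structural normal form: \emph{every satisfying substitution can be replaced by one in which each variable is mapped to a power of $\ta$}. Concretely, given any solution $h : (A \cup X)^* \to A^*$ over an alphabet $A \ni \ta$, I would define $h'(x) := \ta^{|h(x)|}$ for every $x \in X$. For any atomic equation $U = V$ with $U,V \in (X \cup \{\ta\})^*$ satisfied by $h$ we have $|h(U)| = |h(V)|$, and the image of $h'$ on both sides consists entirely of $\ta$'s whose number matches the original length, so $h'(U) = \ta^{|h(U)|} = h'(V)$. Linear length constraints refer only to $|h(x_i)| = |h'(x_i)|$ and hence are preserved. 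The positive Boolean structure (conjunctions and disjunctions) is preserved atom by atom. As a bonus, satisfiability turns out to be independent of the target alphabet $A$, since $\{\ta\}^* \subseteq A^*$ whenever $\ta \in A$.

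Next I would linearize. Writing $\ell_i := |h'(x_i)| \in \mathbb{N}_0$, each equation $U = V$ becomes the linear Diophantine condition
\[
\sum_{i=1}^n |U|_{x_i}\,\ell_i + |U|_\ta \;=\; \sum_{i=1}^n |V|_{x_i}\,\ell_i + |V|_\ta,
\]
and the input formula becomes a positive Boolean combination of linear (in)equalities over $\mathbb{N}_0$. Its satisfiability lies in NP: a nondeterministic machine guesses, for each disjunction, which disjunct to commit to, and the resulting conjunctive system of linear (in)equalities is checked by the classical small-model property of integer linear programming. For NP-hardness I would simply embed arbitrary ILP feasibility instances directly into the input syntax: an arbitrary system of linear Diophantine (in)equalities can serve as the length constraints, and trivial word equations of the form $x_i = x_i$ introduce the corresponding string variables. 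Since ILP feasibility is NP-hard, so is our problem.

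The main obstacle will be the structural normal form step, since it is the only place where the single-terminal hypothesis is actually used and it is responsible for collapsing the combinatorics of word equations to arithmetic of lengths. Everything else is a transparent dictionary between our formula and existential Presburger arithmetic, together with off-the-shelf complexity results.
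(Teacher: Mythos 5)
Your proposal is correct and takes essentially the same approach as the paper's proof: your all-$\ta$ normal form $h'(x) := \ta^{|h(x)|}$ is precisely the paper's observation that every solution yields an all-$\ta$ solution with the same variable lengths, after which both arguments translate each equation into the same linear Diophantine condition, invoke the small-model property of non-negative integer linear systems for NP membership, and get hardness by encoding ILP feasibility through the length constraints. Your only deviation---nondeterministically guessing one disjunct per disjunction rather than rewriting into disjunctive normal form as the paper does---is a harmless (in fact slightly more careful, since it avoids the potential exponential blowup of an explicit DNF) implementation of the same step.
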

\begin{proof}
First we consider a single equation $U=V$. 

Let us overload the notation $|U|_x$ to denote the number of
occurrences of the variable $x$ in $U \in (A \cup X)^*$.

Consider the equation $U(x_1, \dots, x_n) = V(x_1, \dots, x_n)$ that does not
contain any letters from the alphabet other than $\ta$. Then any solution
$h$ to this equation must satisfy $|h(U)| = |h(V)|$ which implies the
linear Diophantine equation
\begin{equation} \label{eqn:linear}
  \begin{aligned}
    & |U|_{x_1} |h(x_1)| + \dots + |U|_{x_n} |h(x_n)| + |U|_\ta \\
    = & |V|_{x_1} |h(x_1)| + \dots + |V|_{x_n} |h(x_n)| + |V|_\ta
  \end{aligned}
\end{equation}

If we consider only solutions $h$ where $h(x_i) \in \{\ta\}^*$ for all $1 \le i \le n$,
then the set of solutions are exactly the morphisms corresponding to the solutions
of equation~\ref{eqn:linear}.
Furthermore, any general solution must also satisfy equation~\ref{eqn:linear}, so
for every solution (which may involve elements of the alphabet other than $\ta$), there
is a solution using only $\ta$ with the same lengths for each variable.

Then to solve the satisfiability problem for a conjunction of equations with occurrences
of at most one letter $\ta$ with an additional set of linear length
constraints $\theta$ (i.e., a system of such word equations with length constraints), it is sufficient to check the satisfiability of the conjunction
of equation~\ref{eqn:linear} (for each equation) and $\theta$. This can be done
since each equation is linear.

If this system is satisfiable, a value for the length
of each variable can be obtained, and a solution using all $\ta$s can be constructed. Conversely, if a solution exists, the lengths of the variables
under this solution will be a solution to the system of linear equations since 
every solution to the equation must satisfy equation~\ref{eqn:linear}.

Now, each quantifier-free positive formula over word equations can be rewritten in disjunctive normal form, i.e., a disjunction of conjunctions of word equations. Deciding whether the entire formula is satisfiable is equivalent to deciding whether one of the conjunctions is satisfiable. This can be done as above.

It is clear that the system of equations~\ref{eqn:linear} augmented by the length constraints $\theta$ can be constructed in polynomial time. Solving systems of linear equations for non-negative integers is in NP~\cite{papadimitriou1981}, and thus the above algorithm runs in non-deterministic polynomial time. Conversely, it is easy to see that the linear length constraints can be turned into inequalities (for example, $|h(x)| \geq |h(y)|$ can be modelled with the equation $x = yz$ and the length constraint $|h(x)| = |h(y)|$). Thus the standard reduction from 3SAT to integer linear programming can be applied to get NP-hardness.
\end{proof}

Complementing the above result, we can show that the satisfiability of quantifier-free first order formulae over word equations $U = V$ (so including negation), such that $U, V \in (X\cup\{\ta\})^*$, with linear length constraints is equivalent to solving arbitrary word equations with length constraints. As such, we cannot say anything about the decidability of such formulae. One direction of our result is immediate, we only show the other one.

\begin{theorem}\label{the:permutation}
Let $|A|\geq 2$ and $\ta\in A$. Given an equation $U=V$, with $U,V\in (A\cup X)^*$, with linear length constraints~$\theta$, there exists a system $\mS$ of positive and negative equations $U_i =V_i$ or $U_i\neq V_i$ with $U_i,V_i\in (X'\cup\{\ta\})^*$ and $X\subset X'$, such that $\mS$ is satisfiable if and only if $U=V$ is satisfiable.
\end{theorem}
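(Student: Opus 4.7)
The plan is to replace every non-$\ta$ terminal appearing in $U$ and $V$ by a fresh variable, and to use negated equations together with length constraints to pin down each such fresh variable to exactly one non-$\ta$ letter. Write $A = \{\ta, \tb_2, \ldots, \tb_k\}$ with $k = |A| \geq 2$. For each $i \in \{2, \ldots, k\}$, introduce a fresh variable $y_i$ and set $X' = X \cup \{y_2, \ldots, y_k\}$. Let $U', V' \in (X' \cup \{\ta\})^*$ be obtained from $U$ and $V$ by replacing every occurrence of $\tb_i$ with $y_i$. The system $\mS$ will consist of the positive equation $U' = V'$, the negated equations $y_i \neq \ta$ for $2 \leq i \leq k$, and the negated equations $y_i \neq y_j$ for $2 \leq i < j \leq k$. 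To $\mS$ I attach the original length constraints $\theta$ together with the constraints $|y_i| = 1$ for each $i$, all of which fit in the linear length-constraint framework.

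For the forward direction, any solution $h$ of $U = V$ satisfying $\theta$ extends to an assignment $h'$ on $X'$ by setting $h'(y_i) = \tb_i$. A direct substitution check yields $h'(U') = h(U) = h(V) = h'(V')$; clearly the $h'(y_i)$ are distinct non-$\ta$ letters of length one, and $\theta$ is preserved because $h'$ coincides with $h$ on $X$.

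For the backward direction, suppose $h'$ is a solution of $\mS$ together with its length constraints. The constraints $|y_i| = 1$ force each $h'(y_i)$ to be a single letter of $A$, the inequalities $y_i \neq \ta$ exclude $\ta$, and the pairwise inequalities force $h'(y_2), \ldots, h'(y_k)$ to be pairwise distinct. Since $|A \setminus \{\ta\}| = k-1$, the assignment $i \mapsto h'(y_i)$ is a bijection from $\{2, \ldots, k\}$ onto $A \setminus \{\ta\}$ and thus equals $i \mapsto \tb_{\sigma(i)}$ for some permutation $\sigma$ of $\{2, \ldots, k\}$. Let $\pi \colon A \to A$ be the letter permutation fixing $\ta$ and mapping $\tb_{\sigma(i)} \mapsto \tb_i$, extended morphically to $A^*$, and define $h(x) = \pi(h'(x))$ for $x \in X$. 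Applying $\pi$ to both sides of $h'(U') = h'(V')$ undoes the relabelling and yields $h(U) = h(V)$; since $\pi$ is a bijection on $A$ it preserves lengths, so $h$ satisfies $\theta$.

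The main obstacle, and the reason the inequalities $y_i \neq y_j$ are essential, is precisely that a solution of $\mS$ need not assign $y_i$ the ``canonical'' value $\tb_i$: without pairwise distinctness, two of the $y_i$ could collapse to the same letter and $U' = V'$ could be satisfied by assignments that do not correspond to any solution of $U = V$ (e.g.\ the unsatisfiable equation $\tb_2 = \tb_3$ would become the satisfiable $y_2 = y_3$). Once distinctness is imposed, the permutation argument above recovers a genuine solution, leveraging the crucial fact that linear length constraints are invariant under letter permutations.
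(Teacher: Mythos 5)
Your proof is correct and takes essentially the same route as the paper's: replace the non-$\ta$ terminals by fresh variables, force those variables to be pairwise distinct single letters via negated equations and unit length constraints, and recover a genuine solution of $U=V$ from a solution of $\mathcal{S}$ by applying a length-preserving permutation of the alphabet (which also preserves $\theta$). The only cosmetic difference is that the paper introduces a variable for $\ta$ itself and pins it with the positive equation $y_1=\ta$, whereas you keep $\ta$ as a terminal and add the inequalities $y_i\neq\ta$; the two devices are interchangeable.
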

\begin{proof}
Let $A=\{\ta_1,\ldots,\ta_n\}$, with $\ta=\ta_1$. We define the set of variables $Y=\{y_1,\ldots,y_n\}$, such that $X\cap Y=\emptyset$. Now define the set of negative equations $S_1=\{y_i\neq y_j\mid 1\leq i<j\leq n\}$. Moreover, let $U'=V'$ be the equation obtained by replacing in $U=V$ each occurrence of $a_i$ by $y_i$, for $1\leq i\leq n$. Now, let $\mS$ be the system defined by $S_1\cup\{U'=V'\}\cup\{y_1=\ta_1\}$ with the length constraints defined by $\theta$ and $|y_i|=1$ for all $2\leq i \leq n$. Basically, the equations $S_1\cup\{y_1=a\}$ and the new length constraints ensure that $\{y_1,\ldots,y_n\}$ encode a permutation of $A$. As the actual label of the symbols of $A$ is not important to the satisfiability of $U=V$ (i.e., we can relabel the letters as we want, as long as we assign different labels to different letters), it follows that $U=V$ is satisfiable if and only if $\mS$ is satisfiable.
\end{proof}
If $|A|=1$, the satisfiability of quantifier-free first order formulae over word equations is decidable, as their theory can be seen as a fragment of the Presburger arithmetic.

Building on Theorem~\ref{the:oneletter}, the next result considers the $\Sigma_2$ fragment in the case that only one letter may appear in the equations (although recall that this does not imply that $|A| = 1$). In particular, if the positive theory only is considered, but in addition, the $\Length$ predicate defined in the previous section (i.e., $\Length(x,y)$ is true if and only if $|x| = |y|$) is allowed, then satisfiability remains decidable. 
Note in particular that the $\Length$ predicate can be used in conjunction with simple equations to model arbitrary linear length constraints.

\begin{theorem}\label{the:oneletter2}
Let $\ta \in A$. The positive $\Sigma_2$ fragment, restricted to word equations containing only the terminal symbol $\ta$, augmented with the $\Length$ predicate, is decidable.
\end{theorem}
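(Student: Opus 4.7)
The plan is to lift the approach of Theorem~\ref{sigma2+} to handle the $\Length$ atoms, reducing the problem to the decidable class of Theorem~\ref{the:oneletter}. Write the input sentence in prenex form $\exists x_1,\ldots,x_n. \forall y_1,\ldots,y_m. \psi$, and put the positive quantifier-free body $\psi$ into disjunctive normal form, $\psi \equiv \bigvee_{i=1}^{t} C_i$, where each $C_i$ is a conjunction of word equations over $(X\cup Y\cup\{\ta\})^*$ and $\Length$-atoms on terms over the same alphabet. If $m=0$ then $\psi$ is existential and is already handled by Theorem~\ref{the:oneletter}, so assume $m\geq 1$.

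The central claim that drives the algorithm is the following analogue of the key observation in the proof of Theorem~\ref{sigma2+}: for any assignment $\sigma$ of $x_1,\ldots,x_n$, the sentence $\forall y_1,\ldots,y_m. \bigvee_i C_i$ holds over $A^*$ (with $|A|\geq 2$) if and only if some $C_i$ is \emph{universally satisfied} in $y_1,\ldots,y_m$, meaning every atom of $C_i$ is true for every assignment of the $y_j$. Granting this, a word equation $U=V$ of $C_i$ is universally satisfied iff, after applying $\sigma$, the two sides coincide as strings in $(Y\cup A)^*$; this reduces to a purely syntactic comparison of the $Y$-variable skeletons of $U$ and $V$ together with, for each corresponding pair of $(X\cup\{\ta\})$-segments between consecutive $Y$-occurrences, a single-terminal word equation in $x_1,\ldots,x_n$. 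Similarly, a $\Length(u,v)$ atom is universally satisfied iff $|u|_{y_j}=|v|_{y_j}$ for every $j$ and the residual linear length equation in $|x_1|,\ldots,|x_n|$ (and the counts of $\ta$) holds. Hence ``$C_i$ is universally satisfied'' is expressible as a conjunction of single-terminal word equations with linear length constraints on $x_1,\ldots,x_n$; its satisfiability is decidable by Theorem~\ref{the:oneletter}, and the input sentence is true iff this holds for at least one $i$.

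The main obstacle is the forward direction of the central claim. Assuming no $C_i$ is universally satisfied, for each $i$ I fix a non-universal atom $A_i\in C_i$ and aim to build a single assignment of $y_1,\ldots,y_m$ that falsifies all chosen $A_i$ simultaneously. Generalising Lemma~\ref{lem:trivialeqs}, I consider the family of assignments $h(y_j)=\ta\tb^{k_j}\ta$ indexed by pairwise distinct positive integers $k_1,\ldots,k_m$. Taking all $k_j$ larger than the length of any substituted atom, the proof of Lemma~\ref{lem:trivialeqs} adapts almost verbatim to this parametrisation and guarantees that $h$ falsifies every non-trivial word-equation atom among the $A_i$. Each non-universal $\Length$-atom among the $A_i$ translates, under $\sigma$ and $h$, into an affine constraint $\sum_j c_j k_j = c$ on $(k_1,\ldots,k_m)$; this is either immediately unsatisfiable (when all $c_j=0$ but $c\neq 0$) or carves out a single proper affine hyperplane in $\mathbb{N}^m$. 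Only finitely many such hyperplanes appear, and $\mathbb{N}^m$ with $m\geq 1$ is not covered by any finite union of proper affine hyperplanes together with the diagonal set $\{(k_1,\ldots,k_m) : k_i=k_j \text{ for some } i\neq j\}$ and a bounded region. Any tuple avoiding all of these yields the required $\bar y$, closing the argument.
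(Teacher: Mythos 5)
Your proof is correct and follows the same overall skeleton as the paper's: put the sentence in disjunctive normal form, show that for a fixed assignment of the $x_i$ the universally quantified disjunction holds if and only if some disjunct becomes trivial/universally satisfied, falsify non-trivial atoms with assignments of the shape $\ta\tb^{k}\ta$ via Lemma~\ref{lem:trivialeqs}, and finally reduce to Theorem~\ref{the:oneletter}. You diverge in two worthwhile ways. First, you use the semantic notion ``universally satisfied'' where the paper uses a syntactic ``trivial''; this quietly repairs an imprecision in the paper's case analysis, where a non-trivial $\Length$ atom whose arguments are distinct constant words (or substituted $x_i$'s) of \emph{equal} length is in fact true, not false as the paper's wording suggests --- the paper's algorithm survives only because it later converts such atoms into length constraints $|x_i|=|x_j|$, i.e., it too is semantic where it matters. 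Second, and more substantively, the paper restricts $\Length$ arguments to single variables or constant words and falsifies everything with one fixed assignment $y_i=\ta\tb^{p+i}\ta$ plus a three-case analysis; you allow arbitrary concatenated terms as $\Length$ arguments, generalize Lemma~\ref{lem:trivialeqs} to pairwise distinct large exponents $k_j$ (the lemma's proof indeed only needs $k_j$ exceeding the terminal runs and $h(y_i)$ never a prefix of $h(y_j)$ for $i\neq j$), and then choose $(k_1,\ldots,k_m)\in\mathbb{N}^m$ avoiding the finitely many proper affine hyperplanes induced by the non-universal $\Length$ atoms, the diagonals, and a bounded region. The avoidance argument is sound: each non-universal $\Length$ atom either has a nonzero $y$-coefficient, contributing a proper hyperplane, or is false identically in $\bar y$. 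This buys a strictly more general reading of the theorem, since a compound argument to $\Length$ cannot in general be flattened to an atomic one inside the positive $\Sigma_2$ fragment (a fresh variable would introduce an existential after the universal block). One small omission: you assume $|A|\geq 2$ only parenthetically. This matches the paper's standing convention, but the central claim genuinely fails over a unary alphabet (e.g., $y_1y_2=y_2y_1$ is universally true without being trivial), which is why the paper dispatches $|A|=1$ separately via Presburger arithmetic; you should do the same or state the restriction explicitly.
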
 
\begin{proof}
For the purposes of this proof we shall say that a term is trivial if, for all the word equations $U = V$, $U$ and $V$ are identical, and moreover, all $\Length$ predicates take identical arguments (i.e. they are of the form $\Length(z,z)$).
If $|A| = 1$, decidability follows from the decidability of Presburger arithmetic by the same arguments as in the proof of Theorem~\ref{the:oneletter}. Thus we may assume $\ta,\tb \in A$ with $\ta \not= \tb$.
W.l.o.g.\,we may assume that we have a sentence in disjunctive normal form as follows:
\begin{equation} \label{eqn:dnf2}
  \begin{aligned}
	& \;\exists x_1,x_2,\ldots, x_n. \forall y_1,y_2,\ldots, y_m. (e_{1,1} \land \ldots \land e_{1,k_1}) \;\lor \\ 
    & (e_{2,1} \land \ldots \land e_{2,k_2}) \lor \ldots \lor (e_{t,1} \land \ldots \land e_{t,k_t}),
  \end{aligned}
\end{equation}
where the $e_{i,j}$ are either:
\begin{enumerate}
\item{}of the form $\Length(z_1,z_2)$ where\\ $z1,z2 \in \{x_1,x_2,\ldots,x_n,y_1,y_2,\ldots,y_m\} \cup A^*$, or
\item{}individual word equations over the variables \\$x_1,x_2,\ldots,x_n,y_1,y_2,\ldots,y_m$ and the terminal symbol~$\ta$.
\end{enumerate}

As with the proof of Theorem~\ref{sigma2+}, we shall show that an assignment for $x_1,x_2,\ldots,x_n$ satisfies~(\ref{eqn:dnf2}) if and only if there exists $s,1\leq s \leq t$ such that all the resulting atoms $e_{s,i}$ become trivial. The `if' direction is straightforward, thus we consider the `only if' direction. Suppose the $x_1,x_2,\ldots,x_n$ are fixed, and consider the result of each $e_i$ under the substitution. Suppose that for each  $s, 1\leq s \leq t$ there exists $r_s, 1\leq r \leq k_s$ such that $e_{s,r_s}$ is non-trivial. Let $p$ be the maximum over the lengths of all constant terms in the sentence, lengths of the $x_i$, and lengths of equations given by the type-(2) atoms $e_{i,j}$ for $i, 1\leq i \leq t$, $1 \leq j \leq k_i$. Consider the choice of $y_1,y_2,\ldots,y_m$ given by $y_i = \ta\tb^{p +i}\ta$. By Lemma~\ref{lem:trivialeqs}, if $e_{s,r_s}$ is of type (2), then it will evaluate to false. If $e_{s,r_s}$ is of type (1), then we have three cases. Firstly, if both arguments to the $\Length$ predicate are constant terms in $A^*$, then clearly $e_{s,r_s}$ will evaluate to false since it is non-trivial. Similarly, since the $y_i$ are longer than all constant terms, if exactly one of the arguments is a constant in $A^*$ while the other is a variable in $\{y_1,y_2,\ldots,y_m\}$, then $e_{s,r_s}$ will also evaluate to false. Finally, since $|y_i| \not= |y_j|$ for all $i \not=j$, if both arguments are variables $e_{s,r_s}$ will again evaluate to false. Summarising the above, for any given choice of $x_1,x_2,\ldots,x_n$ there exists single a choice of $y_1,y_2,\ldots,y_m$ such that any of the conjunctions containing a non-trivial equation or $\Length$ predicate will be false. It follows that the sentence is satisfiable if and only if there exists a choice for $x_1,x_2,\ldots,x_n$ and $s,1\leq s \leq t$ such that all the $e_{s,i}$ terms, $1\leq i \leq k_s$ become trivial.

We have shown already in the proof of Theorem~\ref{sigma2+} that for terms $e_{i,j}$ of type (2), this is reduced to solving a series of existentially quantified word equations over $x_1,x_2,\ldots,x_n$. Moreover, a term $e_{i,j}$ of type (1) may only become trivial under some substitution for the $x_i$s either if it is already trivial, in which case it can just be removed, or if both arguments are in $\{x_1,x_2,\ldots,x_n\}$. Thus, any of the clauses  
$(e_{i,1} \land \ldots \land e_{i,k_i})$ 
containing a term $e_{i,j}$ not conforming to these two cases can be removed entirely. After these two phases of removal, it remains to solve, for each $s$, $1\leq s\leq t$, a series of systems of equations (derived from the $e_{s,i}$ terms of type (2), as described in the proof of Theorem~\ref{sigma2+}) subject to a system of linear length constraints (derived from the terms of type (1)). It is clear that the resulting equations will also only contain the terminal symbol $\ta$, since they are taken directly from the original equations, so the decidability follows from Theorem~\ref{the:oneletter}.
\end{proof}

Note that Theorems~\ref{the:oneletter2} and~\ref{the:permutation} together do not imply decidability of the existential theory of word equations with length constraints, due to the fact that the former excludes the use of logical negation while the latter requires it. On the other hand, it follows from Theorem~\ref{the:permutation}, along with the fact that the full $\Sigma_2$ fragment -- in which negation is allowed -- is undecidable, that the full $\Sigma_2$ fragment with $\Length$ but restricted to equations with only one terminal symbol is undecidable. Therefore, the decidability shown in Theorem~\ref{the:oneletter2} is, in a sense, optimal. 

If instead of restricting the terminal symbols appearing in the equation(s) we restrict the variables, we are also able to obtain decidability when augmenting the theory with both linear arithmetic over variable lengths, and regular constraints given in the form of DFAs.

\begin{theorem}\label{the:constraintsNP}
The satisfiability of strictly regular-ordered word equations with linear length constraints and regular constraints given by DFAs is NP-complete.
\end{theorem}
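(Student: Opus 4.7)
The plan is to handle NP-hardness and NP-membership separately. NP-hardness is immediate from the MFCS2017 result mentioned in the introduction: strictly regular-ordered equations are already NP-hard even without length or regular constraints, and any algorithm for the extended problem solves the restricted one. The nontrivial task is NP-membership, which I would establish by reducing satisfiability to integer linear programming (ILP) feasibility, which is in NP by \cite{papadimitriou1981} (a result already used in Theorem~\ref{the:oneletter}).

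The key structural observation is as follows. Write the equation as $U = u_0 x_1 u_1 \cdots x_n u_n$ and $V = v_0 x_1 v_1 \cdots x_n v_n$ with $u_j, v_j \in A^*$; for any substitution $h$ with $w_i := h(x_i)$, the common word $w := h(U) = h(V)$ contains two occurrences of $w_i$ separated by the fixed shift $\delta_i := \sum_{j<i}(|u_j|-|v_j|)$, which is independent of $h$ and satisfies $|\delta_i|\leq|UV|$. A direct combinatorial argument (the overlap of the two occurrences forces $w_i[k] = w_i[k+|\delta_i|]$, combined with Lemma~\ref{lem:commute} style reasoning) yields a dichotomy: either $|w_i|\leq|\delta_i|$, in which case $w_i$ has polynomially bounded length; or $|w_i|>|\delta_i|$, in which case $w_i$ is $|\delta_i|$-periodic. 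The NP algorithm then nondeterministically guesses, for each variable $x_i$: (a) which regime applies; (b) in the short regime, the explicit value of $w_i$; (c) in the long regime, a period word $p_i$ of length $|\delta_i|$ together with a remainder $r_i \in [0,|p_i|)$ playing the role of $|w_i| \bmod |p_i|$. All guesses are of size polynomial in $|UV|$.

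With the guess fixed, assemble an ILP with nonnegative integer unknowns $\ell_i$ standing for $|w_i|$. The constraints are: (i) the input linear length constraints; (ii) in the short regime, $\ell_i$ equals the guessed length of $w_i$; (iii) in the long regime, $\ell_i = r_i + k_i |p_i|$ with $k_i \geq 1$ an additional integer variable; (iv) the regular constraint $h(x_i) \in L(D_i)$, handled in the long regime by additionally guessing the eventual-periodicity data of the self-map $\tau_i\colon Q_{D_i}\to Q_{D_i}$ induced by reading $p_i$ (a tail length $a_i$ and a cycle length $b_i$, both bounded by $|Q_{D_i}|$), together with the final accepting transition after the length-$r_i$ prefix of $p_i$; this translates DFA acceptance into the linear constraint $k_i \in a_i + b_i \mathbb{Z}_{\geq 0}$, i.e.\ a congruence on $\ell_i$ (with the finitely many small-$k_i$ cases checked directly, and the short regime's DFA check done in the certificate itself); and (v) the equation-consistency constraints, ensuring that in the parameterized word $w$ every constant $u_j$ and $v_j$ sits correctly relative to the fixed or periodic segment it intersects. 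Each such consistency check is local, polynomial in number, and reduces, after the guesses, to either a trivially-true test, an outright inconsistency (reject this branch), or a further linear equation or congruence on the $\ell_i$'s. ILP feasibility then decides the original instance in NP.

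The main obstacle is step (v): one must verify that the guessed periodic shapes of long-regime variables interlock correctly with the constants and with neighbouring long-regime variables, because a single long-regime $w_i$ can overlap a constant $u_{i-1}, u_i, v_{i-1}, v_i$, and two adjacent long-regime $w_i, w_{i+1}$ can have their periodic regions abut or overlap in $w$. The bookkeeping is intricate but finite, and extends the MFCS2017 machinery for strictly regular-ordered equations with regular constraints; the new ingredient is that the lengths are now ILP unknowns that must be carried through, and that DFA acceptance on potentially exponentially long periodic values is encoded via a polynomial-size guess of the $\tau_i$-trajectory. Since the number of local consistency tests, the size of each guess, and the number of arithmetic progressions to commit to are all polynomial in $|UV|$ and in the total size of the DFAs, the whole certificate is polynomial and the verification collapses to ILP feasibility, giving the NP upper bound and matching the NP-hardness lower bound.
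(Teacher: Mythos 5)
Your overall strategy coincides with the paper's: distinguish short variables from long (overlapping) ones, exploit the fact that a long variable's value is periodic with period bounded by the constant shift $\delta_i \leq |UV|$, encode DFA acceptance of the periodic values as arithmetic progressions in the exponent (your tail/cycle analysis of the transition map $\tau_i$ is exactly the content of the paper's Lemma~\ref{lem:intersection}), and finish with a linear Diophantine system solved in NP via~\cite{papadimitriou1981}. However, there is a genuine gap at your step (v), and it sits precisely where the paper's key structural insight lives. You treat the ``interlocking'' of the guessed periodic shapes with the constants and with neighbouring long variables as the main obstacle, and you dispose of it by asserting that the bookkeeping is ``intricate but finite'' and reduces to local linear constraints, extending the machinery of~\cite{MFCS2017}. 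That assertion is never proved, and it is not obviously true as stated: a priori, two abutting periodic regions with incommensurate periods could impose relations between the $\ell_i$ that are not linear congruences, which is exactly the phenomenon the paper points out for \emph{non-strict} regular-ordered equations (its example $x\ta\tb z = zy$ yields the nonlinear relation $|h(z)| = n_z(2+|h(x)|)$). So the burden of proof is real, and your proposal leaves the hardest step as a claim.

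The paper's proof avoids step (v) entirely by proving a pumping/independence property specific to \emph{strictly} regular-ordered equations: if $h$ is any solution and $x$ is an overlapping variable with $h(x) \in (\alpha\beta)^+\alpha$, then replacing $h(x)$ by $(\alpha\beta)^n\alpha$ for \emph{any} $n \in \mathbb{N}$, leaving all other variables untouched, again yields a solution. (This works because each variable occurs exactly once on each side in the same order, so $U_x$ and $V_x$ contain the same multiset of variables and changing $|h(x)|$ shifts both sides coherently.) Consequently the set of basic solutions is captured by parametric solutions $h(x) = (\alpha_x\beta_x)^{n_x}\alpha_x$ in which the exponents $n_x$ are \emph{independent free parameters}: after checking that the base morphism (minimal exponents) solves $U = V$, there are no consistency constraints whatsoever between different variables' exponents, and only the length constraints and the regular-constraint congruences remain, which is what makes the final system linear. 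Your proof becomes complete exactly if you establish this independence lemma, at which point your consistency checks in (v) are vacuous rather than intricate; without it, the claim that the residual constraints are linear and polynomially many is unsupported, and the NP upper bound does not follow.
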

First we need the following lemma:
\begin{lemma}\label{lem:intersection}
Let $L$ be a regular language given by a DFA, $M$, with $n$ states. Let $\alpha,\beta \in A^*$. Then there exist $q\in \mathbb{N}$, $P, S \subseteq \mathbb{N}_{\leq n}$  such that the intersection of $(\alpha\beta)^+\alpha$ and $L$ is given by
\[ \{(\alpha\beta)^s\alpha \mid s \in S\} \cup \{(\alpha\beta)^{q\mu + p}\alpha \mid \mu \in \mathbb{N} \land p \in P\}. \]
\end{lemma}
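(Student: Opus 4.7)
The plan is to analyze the sequence of states visited by $M$ on reading increasing powers of $\alpha\beta$, and apply pigeonhole to the finite state set to extract the required periodicity. Let $M = (Q, A, \delta, q_0, F)$ with $|Q| = n$, and for each $k \geq 1$ let $q_k = \delta^*(q_0,(\alpha\beta)^k)$ denote the state reached after reading $k$ copies of $\alpha\beta$. Since $(\alpha\beta)^k\alpha \in L$ iff $\delta^*(q_k,\alpha) \in F$, membership in $L$ is a function of $q_k$ alone.

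First I would apply pigeonhole to the $n+1$ states $q_1, q_2, \ldots, q_{n+1}$ living in a set of size $n$, obtaining indices $1 \leq i < j \leq n+1$ with $q_i = q_j$. Setting $p_0 := i$ and $q := j - i$ gives $1 \leq p_0, q \leq n$ together with the crucial bound $p_0 + q \leq n + 1$. Because $M$ is deterministic, reading any fixed suffix from $q_i$ and from $q_j$ produces the same state, so $q_{k+q} = q_k$ for every $k \geq p_0$. Consequently the indicator $k \mapsto [(\alpha\beta)^k\alpha \in L]$ is eventually periodic with period (dividing) $q$ from index $p_0$ onward.

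Finally I would set $S = \{s \in \{1, \ldots, p_0 + q - 1\} : (\alpha\beta)^s\alpha \in L\}$ and $P = \{p \in \{p_0, \ldots, p_0 + q - 1\} : (\alpha\beta)^p\alpha \in L\}$; the bound $p_0 + q - 1 \leq n$ immediately gives $S, P \subseteq \mathbb{N}_{\leq n}$. Any $k$ with $1 \leq k \leq p_0 + q - 1$ and $(\alpha\beta)^k\alpha \in L$ is captured by $S$, and any $k \geq p_0 + q$ can be written uniquely as $k = q\mu + p$ for some $\mu \geq 1$ and $p \in \{p_0, \ldots, p_0 + q - 1\}$, with $(\alpha\beta)^k\alpha \in L$ iff $p \in P$ by eventual periodicity; conversely, every element of the claimed union lies in $L$ by construction. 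The only delicate point is the bookkeeping that forces $p_0 + q \leq n + 1$, which is exactly what keeps the window $\{p_0,\ldots,p_0+q-1\}$ (and hence $P$) inside $\mathbb{N}_{\leq n}$; a sloppier application of pigeonhole to a longer prefix of the state sequence would give the periodicity but lose this size bound.
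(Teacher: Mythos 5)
Your proof is correct and follows essentially the same route as the paper's: pigeonhole on the states reached after reading successive powers of $\alpha\beta$, determinism of $M$ yielding eventual periodicity with period $q$, and a split into a sporadic set $S$ and a periodic window $P$ of length $q$. The only differences are cosmetic --- you track states before the final $\alpha$ rather than after, you avoid the paper's initial case split (your pigeonhole applies unconditionally, with $P$ possibly empty), and your bookkeeping via the $n+1$ indices $q_1,\ldots,q_{n+1}$ is in fact slightly more careful than the paper's stated bound $p_0 + q \leq n$.
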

\begin{proof}
Suppose firstly that there does not exist $t>n$ such that $(\alpha\beta)^t\alpha \in L$. Then the claim follows directly with $S = \{s\in\mathbb{N} \mid (\alpha\beta)^s \alpha \in L\}$ and $P = \emptyset$. Now suppose instead that there exists a word $w = (\alpha\beta)^t \alpha \in L$ such that $t > n$. For $1 \leq i \leq t$, let $a_i$ be the state $M$ is in after reading the input $(\alpha\beta)^i\alpha$. Since $M$ has only $n$ states, there must exist $p_0, q$ with $p_0 < p_0 + q \leq n$ such that $a_{p_0} = a_{p_0 + q}$. Let $P \subseteq \mathbb{N}_{\leq n}$ such that
 \[\{a_{p_i} \mid p\in P\} = \{a_j \mid p_0 < j \leq p_0 + q \land a_j \text{ is accepting}\}.\]
Hence, for $p_0 < r < p_0 + q$, $(\alpha\beta)^r \alpha \in L$ if and only if $r \in P$. Moreover, since $M$ is deterministic, we have that for all $r \geq p_0 + q$, $M$ is in the same state after reading $(\alpha\beta)^r \alpha$ and $(\alpha\beta)^{r-q}\alpha$. It follows by induction that $(\alpha\beta)^r \alpha \in L$ if and only if there exists $\mu \in \mathbb{N}$, and $p\in P$ such that $r = \mu q + p$. Since $p_0 < n$, the statement follows directly.
\end{proof}

We are now ready to prove the main theorem. 
\begin{proof}
NP-hardness follows from the fact that satisfiability of strictly regular-ordered word equations without length constraints is NP-hard~\cite{MFCS2017}. Thus it remains to show inclusion in NP. Let $E$ be a strictly regular-ordered word equation $U=V$ with a set of linear length constraints $\theta$ and regular constraints $L_x$ for each variable $x \in X$. For convenience, we shall call the solutions to the equation $U = V$ ignoring length or regular constraints \emph{basic solutions}. Similarly, we shall refer to solutions to the equation satisfying the regular constraints, but ignoring the length constraints \emph{intermediate} solutions. The majority of the proof shall consider the structure of basic and intermediate solutions. We begin with basic solutions.

Let $U_x$ be the prefix of $U$ up to and including the first (and only) occurrence of $x$ for each $x\in X$, and let $\overline{U_x}$ be the prefix of $U$ up to and not including the first occurrence of $x$. Define $V_x$ and $\overline{V_x}$ similarly. For any variable $x$, note that since $U_x$ and $V_x$ contain exactly the same variables, and thus for any solution $h$, the difference in the lengths of $h(U_x)$ and $h(V_x)$ is exactly the difference in the sum of the lengths of the terminal words. In particular, this implies that $||h(U_x)|-|h(V_x)|| < |UV|$, and similarly, that $||h(\overline{U_x})|-|h(\overline{V_x})|| < |UV|$. Consequently, $h(x)$ can only be longer than $|UV|$ if the two occurrences `overlap', meaning that either
$$|h(\overline{U_x})| \leq |h(\overline{V_x})| \leq |h(U_x)|,\mbox{ or }|h(\overline{V_x})| \leq |h(\overline{U_x})| \leq |h(V_x)|.$$
From now on, we shall distinguish between \emph{overlapping} and \emph{non-overlapping} variables. Let $h$ be a basic solution to $U=V$. Given an overlapping variable $x$, let $p(x)$ be the minimal period of $h(x)$. It follows directly from the definition of a period that $p(x) \leq ||h(\overline{U_x})| - |h(\overline{V_x})|| \leq |UV|$, and moreover, that there exist $\alpha,\beta\in A^*$ such that $p(x) = |\alpha\beta|$ and $h(x) \in (\alpha\beta)^+\alpha$. It is also not difficult to see that for all $n\in \mathbb{N}$, the morphism $h^\prime$ given by $h^\prime(y) = h(y)$ for all $y \not= x$ and $h^\prime(x) = (\alpha\beta)^n\alpha $ is also a solution.

Consequently, all basic solutions to the equation are described by short (linear in $|UV|$) words and numerical parameters. More precisely, the set of basic solutions is given by finitely many `parametric' solutions $h$ of the form $h(x) = (\alpha_x\beta_x)^{n_x} \alpha_x$ where $|\alpha_x\beta_x|\leq |UV|$ and $n_x =0 $ if $x$ is not overlapping, and is a parameter taking any value in $\mathbb{N}$ otherwise.

Thus, in order to describe the intermediate solutions, we consider the possible values of $n_x$ for which $(\alpha_x\beta_x)^{n_x} \alpha_x \in L_x$. In the case that $x$ is non-overlapping, this is straightforward: the set is either $\{0\}$ or $\emptyset$. If $x$ is overlapping, we simply have to consider the intersection $(\alpha_x\beta_x)^+ \alpha_x \cap L_x$. In particular, we can easily compute $S_x$, $P_x$ and $q_x$ from Lemma~2 (i.e., $S$, $P$ and $q$ in the lemma) in polynomial time. Let the set of possible values for $n_x$ such that $h(x) \in L$ is be denoted by 
\[ \Delta_x = S_x \cup \{ \mu q_x + p \mid \mu \in \mathbb{N} \land p \in P_x\}.\]

We can now give a nondeterministic algorithm for solving the equation with linear length constraints and regular constraints as follows. Firstly, we guess which variables are overlapping. For each variable $x$, we then guess $\alpha_x$ and $\beta_x$, followed by whether $x$ is overlapping, and if so, compute $q_x$, $S_x$ and $P_x$. If the morphism $h$ given by $h(x) = (\alpha_x\beta_x)^n_x \alpha_x$, where $n_x = 0$ if $x$ is non-overlapping and $n_x = 1$ otherwise, is not a basic solution to $U=V$, then output no and we are done. Similarly, if there exists an overlapping variable $x$ such that $S_x \cup P_x = \emptyset$, or if there exists a non-overlapping variable $x$ such that $\alpha_x \notin L_x$, then output no and we are done. Otherwise it remains to determine whether there exist values of $n_x$ for each overlapping variable $x$ such that the length and regular constraints are both satisfied. To do this, we guess either an $s_x \in S_x$ or $p_x \in P_x$ for each overlapping variable $x$. Then, we construct a system of linear Diophantine equations from the set $\theta$ of linear length constraints by swapping each occurrence of $|h(x)|$ with $ |\alpha_x\beta_x|(p_x + q_x\mu_x) + |\alpha_x|$ if $x$ is overlapping, and $|\alpha_x|$ otherwise. Note that the result is a linear Diophantine system over variables $\mu_x$ for each overlapping variable $x$. Moreover, by definition, for every possible positive integer value of $\mu_x$, we can construct an intermediate solution to our equation (i.e. one which satisfies both $U = V$ and the regular constraints). Thus, there exists a solution satisfying the equation and all constraints (regular and length) if and only if there exists a non-negative solution (i.e., all unknowns are given non-negative values) to the linear Diophantine system. It follows from~\cite{papadimitriou1981} that if such a solution exists, then there is guaranteed to be solution for which the values are at most exponentially large and thus have polynomially sized binary encodings. Accordingly we can just guess the solution to the system and verify it in polynomial time. If such a solution exists, then we can output yes (and also return a compressed description of the solution) and if no solution exists, then we can output no and we are done.
\end{proof}

On the other hand, for regular-ordered equations without the strictness (i.e. variables may occur in only one side), the equivalent of Theorem~\ref{the:constraintsNP} does not hold. It is a straightforward exercise that regular-ordered equations where each side has only one singly-occurring variable, along with regular constraints given by DFAs is PSPACE-complete. This follows from the fact that determining whether the intersection of $n$ DFAs is empty is PSPACE-hard. 

The decidability of non-strict regular ordered equations with linear length constraints also appears to be harder, as the form of the augmented system of Diophantine equations does not necessarily need to be linear any more. In particular, the presence of variables occurring only on one side allows for variables, or parts of variables to be `ungrounded', in the sense that they can be substituted with any factor and the result remains a valid solution. For example, consider as a simple example the equation $x\ta\tb z = z y$. In the case of solutions $h$ such that $|h(z)| = |h(x)| + 2$, the possibilities for $h(z)$ are given by repetitions of $h(x)\ta\tb$. Hence $|h(z)| = n_z(2 + |h(x)|)$ for some $n_z \in \mathbb{N}$. However, we may choose $h(x)$ freely (although this will of course fix $h(y)$). Thus we can consider the length of $h(x)$ also as an unknown and the previous equation is not linear. 
\section{Conclusion}
\label{sec:hierarchy}

{\small 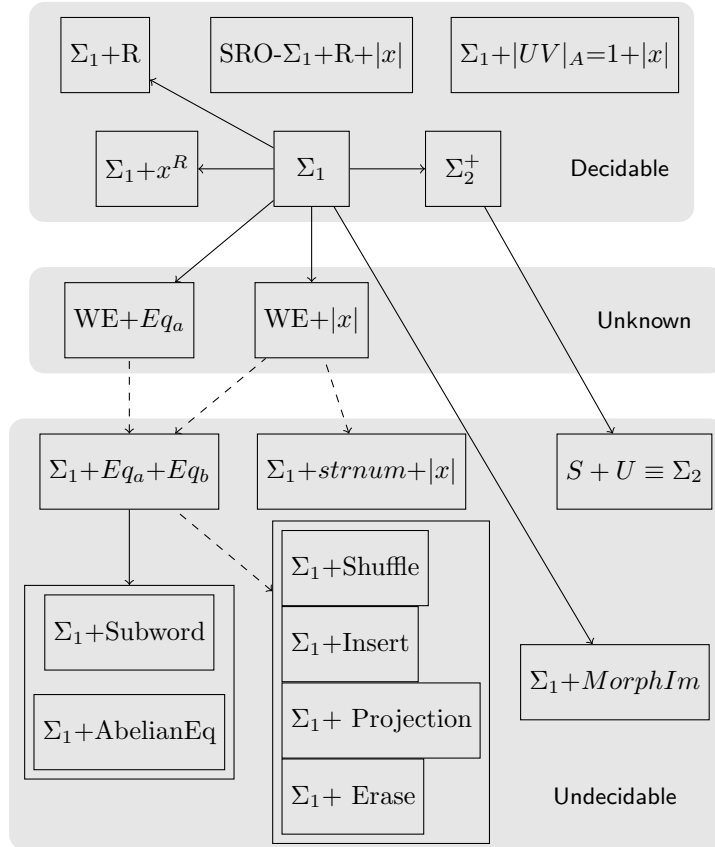
\begin{figure}
  \centering
  \begin{tikzpicture}
  	\tikzstyle{every node}=[draw, shape=rectangle, minimum size=1cm]
  	\tikzstyle{label}=[draw=none, minimum size=1cm]

    \node (SRO) {SRO-$\Sigma_1$+R+$|x|$};
    \node (WER) [left=0.8cm of SRO] {$\Sigma_1$+R};
    \node (WE) [below=0.5cm of SRO] {$\Sigma_1$};
    \node (ONE) [right=0.5cm of SRO] {$\Sigma_1$+$|UV|_A$=$1$+$|x|$};
    \node (rev) [left=of WE] {$\Sigma_1$+$x^R$};
    \node (S2+) [right=of WE] {$\Sigma^+_2$};
    \draw[<-] (WER) -- (WE);
    \draw[<-] (rev) -- (WE);
    \draw[<-] (S2+) -- (WE);

    \node (len) [below=of WE] {WE+$|x|$};
    \node (len0) [left=0.8cm of len] {WE+$Eq_a$};
    \draw[<-] (len) -- (WE);
    \draw[<-] (len0) -- (WE);

    \node (lenall) [below=of len0] {$\Sigma_1$+$Eq_a$+$Eq_b$};
    \node (pow2) [right=0.5cm of lenall] {$\Sigma_1$+$strnum$+$|x|$};
    \node[matrix,below=of lenall] (matrix) {
    & \node (replace) {$\Sigma_1$+Subword}; \\
    & \node (FST) [below=-0.7cm of replace]{$\Sigma_1$+AbelianEq}; \\
    };
    \node[matrix,right=0.5cm of matrix] (matrix2) {
    & \node (shuff) {$\Sigma_1$+Shuffle}; \\ 
    & \node (insert) {$\Sigma_1$+Insert}; \\
    & \node (proj){$\Sigma_1$+ Projection}; \\
    & \node (erase) {$\Sigma_1$+ Erase};  \\
    };
     \node (S+U) [right=1.2cm of pow2] {$S+U \equiv \Sigma_2$};
    \node (morph) [right=0.4cm of matrix2] {$\Sigma_1$+$MorphIm$};
      \draw[<-, dashed] (lenall) -- (len);
      \draw[<-, dashed] (lenall) -- (len0);
    \draw[<-, dashed] (pow2) -- (len);
    \draw[<-] (matrix) -- (lenall);
    \draw[<-, dashed] (matrix2) -- (lenall);
    \draw[<-] (morph) -- (WE);
    \draw[<-] (S+U) -- (S2+);

    \node[label] (undec) [below=0.5cm of morph] {\small \textsf{Undecidable}};
	\path let \p1 = (undec.east) in let \p2 = (WER) in node[label, anchor=east] at (\x1-0.1cm,\y2-1.5cm) {\small \textsf{Decidable}};
	\path let \p1 = (undec.east) in let \p2 = (len) in node[label, anchor=east] at (\x1+0.2cm,\y2) {\small \textsf{Unknown}};

    \begin{pgfonlayer}{background}
      \filldraw [line width=4mm,join=round,black!10]
      (WER.north -| ONE.east)  rectangle (S2+.south -| replace.west)
      (len.north -| morph.east)  rectangle (len.south -| replace.west)
      (lenall.north -| morph.east)  rectangle (erase.south -| matrix.west);
    \end{pgfonlayer}

  \end{tikzpicture}
  \caption{Reductions between different extensions of word equations. An arrow to a theory indicates that a reduction to it exists. A solid arrow indicates that no reduction in the other direction is possible, while a dashed arrow indicates that whether a reduction exists in other direction (i.e.\ whether an isomorphism exists) is unknown.}
  \label{fig:redgraph}
\end{figure}
}

In this paper we showed a series of decidability and undecidability results for various fragments 
of $FO(A^*,\cdot)$ and its extensions, starting from the theory of word equations. 
Our results are summarized and compared to some known results in Figure~\ref{fig:redgraph}. In that figure, on top of the usual notations of this paper, R stands for regular constraints; $|x|$ for length constraints, and $x^R$ for equations with reversal function; $SRO$ stands for strictly regular-ordered equations; $|UV|_A=1$ for equations with only one terminal symbol; $S+U$ stands for systems of sat- and unsat-equations.

From our results one can also immediately derive a series of already known results. The theory of word equations with an operator that replaces all occurrences of one string with another is undecidable according to \cite{lin2016}, where a reduction from PCP was shown; an alternate proof of this can be obtained using the $Erase$ operator we defined. Extending the theory of word equations by adding finite-state transducers also leads to undecidability, according to \cite{morvan2000}; again, we can use, e.g., the $Erase$ operator to obtain an alternate proof of the undecidabiliy of this theory. 

As future work, besides the main outstanding open problem of deciding whether word equations with length constraints are decidable, we also think that it is worth settling whether the satisfiability of an arbitrary $\Sigma_2$ formula can be reduced to the satisfiability of a formula corresponding to an instance of the Inclusion of Pattern Languages problem, or not. Also, settling the decidablity of the satisfiability problem for other classes of restricted word equations with length constraints (e.g., quadratic equations) seems appealing to us.
Whether Proposition~\ref{prop:morphisms} holds also for binary alphabets seems also interesting to us.



\bibliography{ms}

\end{document}